\title{\LARGE \bf
Gradient-Based STL Control with Application to Nonholonomic Systems
}
\author{Peter Varnai and Dimos V. Dimarogonas$^{1}$
	\thanks{This work was partially supported by the Wallenberg AI, Autonomous Systems and Software Program (WASP) funded by the Knut and Alice Wallenberg Foundation, the Swedish Research Council (VR), the SSF COIN project, and the EU H2020 Co4Robots project.}
	\thanks{$^{1}$Both authors are with the Division of Decision and Control Systems, School of Electrical Engineering and Computer Science, KTH Royal Institute of Technology, 114 28 Stockholm, Sweden. {\tt\small varnai@kth.se} (P. Varnai), {\tt\small dimos@kth.se} (D. V. Dimarogonas)}%
}
\newtheoremstyle{bfplain}{}{}{\itshape}{}{\bfseries}{.}{ }{\thmname{#1}\thmnumber{ #2}\thmnote{ (#3)}}
\newtheoremstyle{bfdefinition}{}{}{}{}{\bfseries}{.}{ }{\thmname{#1}\thmnumber{ #2}\thmnote{ (#3)}}
\newtheoremstyle{itdefinition}{}{}{}{}{\itshape}{.}{ }{\thmname{#1}\thmnumber{ #2}\thmnote{ (#3)}}
\DeclareMathOperator*{\st}{s.t.}
\DeclareMathOperator*{\diag}{diag}
\newcommand{\satisfies}{\vDash}
\renewcommand{\and}{\wedge}
\NewDocumentCommand{\until}{oo}{
	\IfNoValueTF{#1}{\mathcal{U}}{\mathcal{U}_{[#1,#2]}}
}
\NewDocumentCommand{\eventually}{oo}{
	\IfNoValueTF{#1}{F}{F_{[#1,#2]}}
}
\NewDocumentCommand{\always}{oo}{
	\IfNoValueTF{#1}{G}{G_{[#1,#2]}}
}
\newcommand{\lbar}[1]{\underaccent{\bar}{#1}}
\newcommand{\tp}{^{\textsc{T}}}
\renewcommand{\iff}{\Leftrightarrow}
\newcommand{\bmat}[1]{\begin{bmatrix} #1 \end{bmatrix}}
\newcommand{\subnorm}[1]{_{#1}}
\NewDocumentCommand{\norm}{som}{
	\IfBooleanTF {#1}{\IfNoValueTF{#2} {\| #3 \|}{\| #3 \|\subnorm{#2}}}
	{\IfNoValueTF{#2} {	\left\| #3 \right\|}{\left\| #3 \right\|\subnorm{#2}}}
}
\NewDocumentCommand{\inReal}{soo}{
	\IfBooleanTF {#1}{}{\in}
	\mathbb{R}
	\IfNoValueTF{#2}{}{\IfNoValueTF{#3}{^{#2}}{^{#2 \times #3}}}	
}
\NewDocumentCommand{\inComplex}{soo}{
	\IfBooleanTF {#1}{}{\in}
	\mathbb{C}
	\IfNoValueTF{#2}{}{\IfNoValueTF{#3}{^{#2}}{^{#2 \times #3}}}	
}
\newcommand{\submat}[1]{_{\scriptstyle{#1}}}
\newcommand{\subvec}[1]{_{\scriptstyle{#1}}}
\newcommand{\supmat}[1]{^{\scriptstyle{#1}}}
\newcommand{\supvec}[1]{^{\scriptstyle{#1}}}
\newcommand{\mat}[1]{\mathbf{#1}}
\renewcommand{\vec}[1]{\boldsymbol{#1}}
\newcommand{\defvec}[2] {
	\DeclareDocumentCommand{#1}{s d<> d[] d''} {
		\IfBooleanTF {##1}
		{\IfNoValueTF{##2}{#2}{##2{#2}}}
		{\IfNoValueTF{##2}{\vec{#2}}{##2{\vec{#2}}}}		
		\IfNoValueTF{##3}{}{\subvec{##3}}
		\IfNoValueTF{##4}{}{\supvec{##4}}
	}
}
\newcommand{\defmat}[2] {
	\DeclareDocumentCommand{#1}{s d<> d[] d''} {
		\IfBooleanTF {##1}
		{\IfNoValueTF{##2}{#2}{[##2{#2}]}}
		{\IfNoValueTF{##2}{\mat{#2}}{##2{\mat{#2}}}}			
		\IfNoValueTF{##3}{}{\submat{##3}}
		\IfNoValueTF{##4}{}{\supmat{##4}}
	}
}
\defmat{\A}{A}
\defmat{\B}{B}
\defmat{\C}{C}
\defmat{\D}{D}
\defmat{\E}{E}
\defmat{\F}{F}
\defmat{\G}{G}
\defmat{\H}{H}
\defmat{\I}{I}
\defmat{\J}{J}
\defmat{\K}{K}
\defmat{\L}{L}
\defmat{\M}{M}
\defmat{\P}{P}
\defmat{\Q}{Q}
\defmat{\R}{R}
\defmat{\S}{S}
\defmat{\T}{T}
\defmat{\U}{U}
\defmat{\V}{V}
\defmat{\W}{W}
\defmat{\X}{X}
\defmat{\Y}{Y}
\defmat{\Z}{Z}
\defmat{\NMAT}{0}
\defmat{\SIG}{\Sigma}
\defmat{\LAM}{\Lambda}
\defvec{\ones}{1}
\defvec{\a}{a}
\defvec{\b}{b}
\defvec{\c}{c}
\defvec{\d}{d}
\defvec{\e}{e}
\defvec{\f}{f}
\defvec{\g}{g}
\defvec{\h}{h}
\defvec{\k}{k}
\defvec{\l}{l}
\defvec{\m}{m}
\defvec{\n}{n}
\defvec{\q}{q}
\defvec{\p}{p}
\defvec{\r}{r}
\defvec{\s}{s}
\defvec{\t}{t}
\defvec{\u}{u}
\defvec{\v}{v}
\defvec{\w}{w}
\defvec{\x}{x}
\defvec{\y}{y}
\defvec{\z}{z}
\defvec{\nvec}{0}
\defvec{\lam}{\lambda}
\defvec{\nuvec}{\nu}
\defvec{\pivec}{\pi}
\defvec{\phivec}{\phi}
\defvec{\rhovec}{\rho}
\defvec{\sigvec}{\sigma}
\defvec{\thetavec}{\theta}
\defvec{\alphavec}{\alpha}
\defvec{\gammavec}{\gamma}
\defvec{\Gammavec}{\Gamma}
\theoremstyle{bfdefinition}
\newtheorem{theorem}{Theorem}
\newtheorem{lemma}{Lemma}
\newtheorem{definition}{Definition}
\newtheorem{problem}{Problem}
\newtheorem{assumption}{Assumption}
\theoremstyle{itdefinition}
\newtheorem{remark}{Remark}
\newtheorem{example}{Example}
\newtheorem{corollary}{Corollary}[theorem]
\DeclareAcronym{TL}{
	short = TL,
	long = temporal logic
}
\DeclareAcronym{LTL}{
	short = LTL,
	long = linear temporal logic
}
\DeclareAcronym{TLTL}{
	short = TLTL,
	long = truncated linear temporal logic
}
\DeclareAcronym{STL}{
	short = STL,
	long = signal temporal logic
}
\DeclareAcronym{PPC}{
	short = PPC,
	long = prescribed performance control
}
\DeclareAcronym{PI2}{
	short = {PI$^2$},
	long = policy improvement with path integrals
}
\DeclareAcronym{TLPS}{
	short = TLPS,
	long = temporal logic policy search
}
\DeclareAcronym{ReLU}{
	short = ReLU,
	long = rectified linear unit
}
\DeclareAcronym{RL}{
	short = RL,
	long = reinforcement learning
}
\DeclareAcronym{MPC}{
	short = MPC,
	long = model predictive control
}
\DeclareAcronym{MDP}{
	short = MDP,
	long = Markov decision process
}
\DeclareAcronym{HJB}{
	short = HJB,
	long = Hamilton-Jacobi-Bellman
}
\begin{document}

\maketitle
\thispagestyle{empty}
\pagestyle{empty}

\begin{abstract}

In this paper, we study the control of dynamical systems under temporal logic task specifications using gradient-based methods relying on quantitative measures that express the extent to which the tasks are satisfied. A class of controllers capable of providing satisfaction guarantees for simple systems and specifications is introduced and then extended for the case of unicycle-like dynamics. The possibility of combining such controllers in order to tackle more complex task specifications while retaining their computational efficiency is examined, and the practicalities related to an effective combination are demonstrated through a simulation study. The introduced framework for controller design lays ground for future work in the direction of effectively combining such elementary controllers for the purpose of aiding exploration in learning algorithms.

\end{abstract}

\section{INTRODUCTION}

In this work, we investigate control strategies for robotic systems subject to so-called \ac{TL} task specifications. \Aclp{TL} have many forms and allow an expression of rich and complex tasks through a combination of Boolean and temporal operators. Designing control strategies which guarantee that the system exhibits the desired behavior has gained considerable interest and is generally performed by abstracting the system space and applying solution techniques over such a discretized domain through high-level planning algorithms \cite{belta2017formal}.  

\Ac{STL} is a specific type of \ac{TL} which enables expressing tasks directly related to the system, without abstraction. The atomic predicates serving as a basis for these expressions are defined over functions of continuous-time system signals \cite{maler2004monitoring}. \ac{STL} allows placing temporal specifications on the evolution of these atomic predicates. This is useful in scenarios where explicit timing is important, such as having a robot visit a charging station within a fixed time span after its battery low indicator goes off. Previous works aim to provide controllers for solving \ac{STL} tasks using methods related to, e.g., \ac{MPC} \cite{raman2014model} or \ac{PPC} \cite{lindemann2017prescribed}. Reinforcement learning methods have also gained attention recently \cite{li2018policy} due to their success in other \ac{TL} languages \cite{fu2017sampling}.
	
Learning methods offer the possibility of dealing with unknown system dynamics as well as to potentially reuse gathered experience to tackle new tasks \cite{pan2010survey}. However, they rely on a multitude of simulations and experiments, which makes computational and sample efficiency crucial for their usability in practice, such as in the case of the policy improvement algorithm \cite{theodorou2010generalized}. Our work aims towards addressing this issue by presenting a framework for designing inexpensive, gradient-based controllers whose purpose is to guide exploration in such learning methods. Such guidance has been shown to yield significant improvements in the performance of policy improvement \cite{varnai2019prescribedARXIV, varnai2019learning}. The controllers sacrifice task satisfaction guarantees in exchange for computational efficiency as they are computed from an ensemble of elementary controllers related to simple subtasks.

The main contributions of the work presented in this paper are outlined as follows. First, a class of controllers with task satisfaction guarantees for simple tasks and dynamical systems is introduced. These controllers stem from prescribing the evolution of a task satisfaction metric in time, based on ideas from \ac{PPC} \cite{bechlioulis2008robust} as in \cite{lindemann2017prescribed}. The introduced framework is then used to extend the range of system dynamics which can be handled to unicycle-like models. Finally, we lay out initial thoughts regarding how to combine the derived controllers, e.g., to aid exploration while learning to solve complex tasks.

The paper is organized as follows. Section \ref{section:preliminaries} introduces \ac{STL} and the dynamical systems and task specifications under consideration. Section \ref{section:framework} derives a framework for gradient-based controller design for \ac{STL} specifications for simple systems. This is expanded to allow control of unicycle-like dynamics for specific forms of task specifications in Section \ref{section:extension}. Section \ref{section:practical} then discusses combining controllers from different task specifications and presents a related simulation study. Concluding remarks are given in Section \ref{section:conclusions}.

\section{Preliminaries} \label{section:preliminaries}

\subsection{\Acf{STL}}

\ac{STL} is a type of predicate logic defined over continuous-time signals \cite{maler2004monitoring}. The \textit{predicates} $\mu$ are either true($\top$) or false($\bot$) according to the sign of a function $h^{\mu}:\inReal*[n] \rightarrow \inReal*$:
\begin{equation*}
\mu := \begin{cases}
	\top \text{ if } h^{\mu}(\x) \geq 0, \\
	\bot \text{ if } h^{\mu}(\x) < 0.
\end{cases}
\end{equation*}
Predicates are recursively combined using Boolean and temporal operators to form more complex \textit{task specifications} $\phi$:
\begin{equation*}
\phi := \top \ |\  \mu \ |\ \neg \phi \ |\ \phi_1 \and \phi_2 \ |\ \phi_1 \until[a][b]\phi_2,
\end{equation*}
where time bounds of the \textit{until} operator $\until[a][b]$ satisfy $a,b \in [0,\infty)$ as well as $a \le b$. The temporal operators \textit{eventually} and \textit{always} are defined from these by $\eventually[a][b]\phi = \top \until[a][b]\phi$ and $\always[a][b]\phi = \neg \eventually[a][b] \neg \phi$. A signal $\x(t)$ satisfies an \ac{STL} expression at time $t$ by the following semantics \cite{lindemann2017prescribed}:
\begin{align*}
&(\x, t) \satisfies \mu &&\iff h^{\mu}(\x(t)) \ge 0, \\
&(\x, t) \satisfies \neg\phi &&\iff \neg((\x, t) \satisfies \phi), \\
&(\x, t) \satisfies \phi_1 \and \phi_2 &&\iff (\x, t) \satisfies \phi_1 \and (\x, t) \satisfies \phi_2, \\
&(\x, t) \satisfies  \phi_1 \until{a}{b}\phi_2 &&\iff \exists t_1 \in [t+a, t+b] \ : \ (\x, t_1) \satisfies \phi_2\\
& && \quad \ \  \mathrm{and}\ (\x, t_2) \satisfies \phi_1 \  \forall t_2 \in [t, t_1],
\end{align*}
where the symbol $\satisfies$ denotes satisfaction of an \ac{STL} formula. 

Various robustness measures $\rho^{\phi}$ that quantify the extent to which a task specification $\phi$ is satisfied are summarized in \cite{donze2010robust}. In this work, we use the so-called \textit{spatial robustness} metric. For the types of tasks encountered in the presented case study example, this is evaluated recursively by: 
\begin{align*} 
\rho^\mu(\x, t) &= h^{\mu}(\x(t)) \\
\rho^{\neg \phi}(\x, t) &= -\rho^{\phi}(\x,t) \\
\rho^{\phi_1 \and \phi_2}(\x, t) &= \min\left(\rho^{\phi_1}(\x, t),\rho^{\phi_2}(\x, t)\right) \\
\rho^{\eventually[a][b]\phi}(\x, t)  &= \max_{t' \in [t+a,t+b]}\rho^{\phi}(\x,t') \\
\rho^{\always[a][b]\phi}(\x, t)  &= \min_{t' \in [t+a,t+b]}\rho^{\phi}(\x,t').
\end{align*}
A task is satisfied if its robustness metric is positive.

\subsection{System description}

Let us consider a nonlinear system of the form
\begin{equation} \label{eq:system}
\dot{\x} = f(\x) + g(\x) \u + \w, \qquad \x(0) = \x[0]
\end{equation}
with state $\x \inReal[n]$, input $\u \inReal[m]$, bounded process noise $\w \in \mathcal{B} \subset \mathbb{R}^n$, and initial state $\x[0] \inReal[n]$. The system is subject to some \ac{STL} task $\phi$ that is obtained by placing temporal specifications on a non-temporal formula $\psi$ composed of atomic predicates $\mu$ as follows:
\begin{equation*}
	\psi := \top\ |\ \mu \ | \ \neg \mu \ | \ \psi_1 \wedge \psi_2.
\end{equation*}
We assume that the temporal task $\phi$ is such that it can be satisfied by properly controlling the evolution of the robustness measure $\rho^{\psi}(\x)$ associated with $\psi$ in time; e.g., $\phi = \eventually[3][6]\psi$ requires $\rho^{\psi}(\x(t')) \ge 0$ for some $t' \in [3, 6]$. For a formal presentation and examples, see \cite{lindemann2017prescribed, varnai2019learning}. This assumption is stated as part of the following general assumptions.

\begin{assumption}[General assumptions] \label{assumption:general}
	The system and task definition are such that:
	\begin{enumerate}[(i)]
		\setlength{\itemsep}{0pt}
		\item the functions $f(\x)$, $g(\x)$, $\rho^{\psi}(\x)$ and its gradient $\frac{\partial \rho^{\psi}(\x)}{\partial \x}$  are locally Lipschitz continuous,
		\item the noise $\w(t)$ is piecewise continuous,
		\item there is a designed smooth curve $\gamma(t)$ such that $\rho^{\psi}(\x(t)) \ge \gamma(t)$ for all $t$ guarantees satisfaction of $\phi$, and
		\item the initial state $\x[0]$ is such that $\rho^{\psi}(\x[0]) \ge \gamma(0)$.		
	\end{enumerate}
\end{assumption}

The goal of the coming sections is to design a control law $\u(\x, t)$ which guarantees that the system satisfies the given task $\phi$, i.e., that the robustness specification $\rho^{\psi}(\x(t)) \ge \gamma(t)$ holds for all $t \ge 0$. The introduced mathematical derivations are primarily based on the following theorems.

\begin{lemma}[Theorem 3.1, Local Existence \& Uniqueness \cite{khalil2002nonlinear}] \label{lemma:odeSolution}
	Consider the initial value problem $\x<\dot> = f(\x, t)$ with given $\x(t_0) = \x[0]$. Suppose $f$ is uniformly Lipschitz continuous in $\x$ and piecewise continuous in $t$ in a closed ball $\mathcal{B} = \left\{\x \inReal[n], t \inReal : \norm{\x - \x[0]} \le r,\ t \in [t_0,\ t_1]\right\}$. Then, there exists some $\delta > 0$ such that the initial problem has a unique solution over the time interval $[t_0, t_0 + \delta]$.
\end{lemma}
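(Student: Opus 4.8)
The plan is to recast the initial value problem as a fixed-point equation on a space of continuous functions and then invoke the Banach contraction mapping theorem. First I would note that a continuously differentiable $\x(t)$ solves $\x<\dot> = f(\x,t)$ with $\x(t_0)=\x[0]$ on $[t_0,t_0+\delta]$ if and only if it satisfies the integral equation
\[
\x(t) = \x[0] + \int_{t_0}^{t} f(\x(s),s)\,\diff s ,
\]
the right-hand side being well defined because $s\mapsto f(\x(s),s)$ is piecewise continuous, hence integrable, whenever $\x(\cdot)$ is continuous. This suggests studying the operator $(P\x)(t)=\x[0]+\int_{t_0}^{t} f(\x(s),s)\,\diff s$ on the Banach space $X$ of continuous maps $\x:[t_0,t_0+\delta]\to\inReal*[n]$ with norm $\norm[c]{\x}=\max_{t\in[t_0,t_0+\delta]}\norm{\x(t)}$; its fixed points are exactly the desired solutions, which are then $C^1$ away from the discontinuity points of $f$ in $t$.

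Next I would restrict $P$ to the closed ball $S=\{\x\in X:\norm[c]{\x-\x[0]}\le r\}$, a complete metric space. Since $f$ is piecewise continuous in $t$ and Lipschitz in $\x$ on the compact cylinder $\mathcal{B}$, it is bounded there, say $\norm{f(\x,t)}\le h$; hence for $\x\in S$,
\[
\norm{(P\x)(t)-\x[0]} \le \int_{t_0}^{t}\norm{f(\x(s),s)}\,\diff s \le h\,\delta ,
\]
so that $\delta\le r/h$ guarantees $P(S)\subseteq S$ (continuity of $P\x$ in $t$ being clear from boundedness of the integrand). Letting $L$ denote the Lipschitz constant of $f$ in $\x$ on $\mathcal{B}$, for $\x,\y\in S$ we similarly obtain
\[
\norm{(P\x)(t)-(P\y)(t)} \le \int_{t_0}^{t} L\,\norm{\x(s)-\y(s)}\,\diff s \le L\,\delta\,\norm[c]{\x-\y},
\]
so $\norm[c]{P\x-P\y}\le L\,\delta\,\norm[c]{\x-\y}$ and $P$ is a contraction whenever $\delta<1/L$.

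Finally, choosing $\delta<\min\{\,r/h,\ 1/L,\ t_1-t_0\,\}$, the Banach fixed-point theorem yields a unique $\x^{*}\in S$ with $P\x^{*}=\x^{*}$; differentiating the integral equation shows that $\x^{*}$ is the claimed solution on $[t_0,t_0+\delta]$. Uniqueness within $S$ is immediate, and uniqueness among all solutions follows since any solution emanating from $\x[0]$ stays, by continuity, in the ball $\norm{\x-\x[0]}\le r$ on a sufficiently small initial interval and must coincide with $\x^{*}$ there, whence a standard step-by-step continuation argument extends the agreement over all of $[t_0,t_0+\delta]$. I expect the only real subtlety to be the coordinated choice of $\delta$ — small enough that the Picard iterates $\x_{k+1}=P\x_k$ never leave the cylinder $\mathcal{B}$ on which $h$ and $L$ are valid, yet small enough to make the contraction factor $L\delta$ strictly less than one — together with the mild bookkeeping needed to accommodate the merely piecewise-continuous dependence of $f$ on $t$.
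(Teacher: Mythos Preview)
Your argument is the standard Picard--Lindel\"of proof via the Banach contraction mapping theorem, and it is correct. Note, however, that the paper does not actually prove this lemma: it is quoted verbatim as Theorem~3.1 from Khalil's \emph{Nonlinear Systems} and used as a black box in the proofs of Theorem~\ref{theorem:localSatisfaction} and Lemmas~\ref{lemma:phiSatisfaction}--\ref{lemma:unicycleTheorem}. Your write-up is essentially the proof one finds in Khalil, so there is nothing to compare beyond observing that you have supplied what the paper merely cites.
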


\begin{lemma}[Theorem 3.3, \cite{khalil2002nonlinear}] \label{lemma:globalSolution}
	Consider the initial value problem of Lemma \ref{lemma:odeSolution}, where $f$ is piecewise continuous in $t$ and locally Lipschitz in $\x$ for all $t \ge t_0$ and all $\x$ in a domain $\mathcal{D} \subset \inReal*[n]$. If every solution of the system lies in a compact subset $\mathcal{W}$ of $\mathcal{D}$, then a unique solution exists to the initial value problem for all $t \ge t_0$.
\end{lemma}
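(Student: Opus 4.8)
The plan is to argue by contradiction using a maximal interval of existence. First, Lemma~\ref{lemma:odeSolution} applied at the given initial condition produces a unique solution $\x(t)$ on some nondegenerate interval $[t_0, t_0 + \delta]$. Let $[t_0, T)$ denote the maximal interval on which a unique solution exists (obtained by patching together local solutions, which agree on overlaps by local uniqueness), and suppose toward a contradiction that $T < \infty$.

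The crucial step is to show that $\x(t)$ approaches a limit lying in $\mathcal{W}$ as $t \to T^-$. Since every solution stays in the compact set $\mathcal{W}$, and $f$ is locally Lipschitz (hence continuous) in $\x$ and piecewise continuous in $t$, it is bounded on $\mathcal{W} \times [t_0, T]$, say $\norm{f(\x, t)} \le M$. Using the integral form of the dynamics, for $t_0 \le s \le t < T$ we have $\x(t) - \x(s) = \int_s^t f(\x(\tau), \tau)\,\diff\tau$, so $\norm{\x(t) - \x(s)} \le M\,(t - s)$. Thus $\x$ is uniformly continuous on $[t_0, T)$ and satisfies the Cauchy criterion as $t \to T^-$; the limit $\x(T) := \lim_{t \to T^-} \x(t)$ therefore exists, and it belongs to $\mathcal{W} \subset \mathcal{D}$ because $\mathcal{W}$ is closed.

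Finally, I would re-apply Lemma~\ref{lemma:odeSolution} at the initial condition $(\x(T), T)$: since $f$ is locally Lipschitz in $\x$ near $\x(T)$ and piecewise continuous in $t$ near $T$, there is a unique solution on $[T, T + \delta']$ for some $\delta' > 0$, which by continuity and local uniqueness patches onto $\x$ to give a solution on $[t_0, T + \delta')$. This contradicts the maximality of $T$, so $T = \infty$, and uniqueness on every finite subinterval yields a unique solution for all $t \ge t_0$.

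The main obstacle is the middle step: establishing that the trajectory genuinely has a limit at the right endpoint of its maximal interval. This is exactly where compactness of $\mathcal{W}$ enters — both to bound $f$ (so that $\x$ cannot oscillate or blow up as $t \to T^-$) and to guarantee the limiting state still lies in the domain $\mathcal{D}$ where the local theorem applies. Without the compact-trapping hypothesis, the solution could escape to the boundary of $\mathcal{D}$ or diverge in finite time, and no extension would be possible.
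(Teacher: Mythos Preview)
The paper does not actually prove this lemma; it is quoted verbatim as Theorem~3.3 from Khalil's \emph{Nonlinear Systems} and used as a black-box tool in the proofs of Theorem~\ref{theorem:globalSatisfaction} and elsewhere. Your argument is the standard maximal-interval-plus-compactness proof that one finds in Khalil and other ODE texts, and it is correct: the trapping assumption bounds $f$ on $\mathcal{W}\times[t_0,T]$, so the trajectory is Lipschitz in $t$, has a limit in the closed set $\mathcal{W}\subset\mathcal{D}$, and can be extended by the local theorem, contradicting maximality of $T$. There is nothing to compare against in the paper itself, and your proof matches the cited source in spirit and structure.
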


\begin{lemma}[Generalized Nagumo's Theorem, {\cite[Section 4.2.2]{blanchini2015set}}] \label{lemma:nagumo}
	Consider the system $\x<\dot> = f(\x, t)$ and time-varying sets of the form $S(t) = \left\{\x : \zeta(\x, t) \le 0 \right\}$ where $\zeta(\x, t)$ is smooth. Assume that the system admits a unique solution and that at any $t$ we have $\frac{\partial \zeta(\x, t)}{\partial \x} \ne \nvec$ for $\zeta(\x, t) = 0$. The condition $x(\tau) \in S(\tau)$ implies $x(t) \in S(t)$ for $t \ge \tau$ if the inequality $\dot{\zeta}(\x, t) \le 0$ holds at the boundary $\zeta(\x, t) = 0$.
\end{lemma}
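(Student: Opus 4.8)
Since the lemma asks only for the sufficiency direction (the condition $\x(\tau) \in S(\tau)$ implies invariance for $t \ge \tau$), the plan is a first‑exit‑time contradiction argument organized around the scalar function $g(t) := \zeta(\x(t), t)$. As $\x(\cdot)$ is the unique solution of $\x<\dot> = f(\x,t)$ and $\zeta$ is smooth, $g$ is continuous and, on every interval where $f(\x(\cdot),\cdot)$ is continuous in $t$, continuously differentiable with $\dot g(t) = \frac{\partial \zeta}{\partial \x}(\x(t),t)\, f(\x(t),t) + \frac{\partial \zeta}{\partial t}(\x(t),t) = \dot\zeta(\x(t),t)$. I would first settle the \emph{strict} variant, i.e. assume $\dot\zeta(\x,t) < 0$ whenever $\zeta(\x,t) = 0$. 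Suppose $g(\tau) \le 0$ but $g(t_2) > 0$ for some $t_2 > \tau$, and set $t_1 := \inf\{t \in [\tau, t_2] : g(t) > 0\}$. Continuity forces $g(t_1) = 0$, and there is a sequence $t_n \downarrow t_1$ with $g(t_n) > 0$, so $\dot g(t_1) = \lim_{n} \frac{g(t_n) - g(t_1)}{t_n - t_1} \ge 0$. But $\x(t_1)$ lies on the boundary $\{\x : \zeta(\x,t_1) = 0\}$, where $\frac{\partial \zeta}{\partial \x} \ne \nvec$, so the hypothesis gives $\dot g(t_1) = \dot\zeta(\x(t_1), t_1) < 0$ — a contradiction. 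Hence no exit occurs and $\x(t) \in S(t)$ for all $t \ge \tau$.

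To recover the non‑strict statement of the lemma, I would approximate. For $\epsilon > 0$, perturb the dynamics to $\dot{\x}_\epsilon = f(\x_\epsilon, t) - \epsilon\, \eta(\x_\epsilon, t)$ with $\x_\epsilon(\tau) = \x(\tau)$, where $\eta$ is a smooth bounded field equal to $\frac{\partial \zeta}{\partial \x}\tp / \norm{\frac{\partial \zeta}{\partial \x}}^2$ on a neighborhood of the boundary set (well defined there precisely because the gradient is nonzero) and extended arbitrarily elsewhere. At any point with $\zeta(\x_\epsilon,t) = 0$ one then has $\dot\zeta = \frac{\partial \zeta}{\partial \x} f - \epsilon + \frac{\partial \zeta}{\partial t} \le -\epsilon < 0$ by the lemma's hypothesis, so the strict case applies to the perturbed system and $\x_\epsilon(t) \in S(t)$ for all $t \ge \tau$. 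Since $f - \epsilon\eta$ is locally Lipschitz, solutions depend continuously on $\epsilon$ on the compact time interval of interest, so $\x_\epsilon(t) \to \x(t)$ as $\epsilon \to 0$; as each $S(t)$ is closed, $\x(t) \in S(t)$, which is the claim.

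I expect the main obstacle to be exactly this strict‑to‑non‑strict passage: when only $\dot\zeta \le 0$ holds on the boundary, the trajectory may graze $\partial S(t)$ with vanishing normal speed, so the first‑exit argument does not close by itself and genuinely needs the regularization above (or an equivalent one, e.g. enlarging the sets by a term growing in $t$). A secondary, purely technical point is the weak time‑regularity allowed by Assumption \ref{assumption:general}: with $f$ merely piecewise continuous in $t$, $g$ is only Lipschitz, so the derivative identity and the inequality $\dot g(t_1) \ge 0$ should be phrased with upper Dini derivatives together with the absolute continuity of $g$, which does not change the logic. The nonvanishing‑gradient hypothesis is used only to guarantee that $\partial S(t)$ is a hypersurface near which $\eta$ is well defined, which is what makes the approximation meaningful.
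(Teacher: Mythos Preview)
The paper does not prove this lemma; it is quoted as a known result from \cite[Section~4.2.2]{blanchini2015set} and then invoked as a black box in the proofs of Theorem~\ref{theorem:localSatisfaction} and Lemmas~\ref{lemma:phiSatisfaction}--\ref{lemma:unicycleTheorem}. There is therefore no ``paper's own proof'' to compare your proposal against.

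That said, your argument is the standard route to Nagumo-type invariance and is essentially correct. The first-exit-time contradiction handles the strict case cleanly, and the passage from the non-strict to the strict inequality by perturbing the vector field along the normalized gradient, followed by continuous dependence on $\epsilon$ and closedness of $S(t)$, is exactly how this reduction is usually done (and is the approach taken in Blanchini's treatment). Two minor caveats worth flagging: (i) continuous dependence of $\x_\epsilon$ on $\epsilon$ needs $f$ to be locally Lipschitz in $\x$, which the lemma as stated does not explicitly assume (only uniqueness of solutions); in the paper's setting this is supplied by Assumption~\ref{assumption:general}(i), so it is harmless for the application. (ii) Your smooth bounded extension of $\eta$ away from a neighborhood of the boundary is unproblematic because $\zeta$ is smooth with nonvanishing spatial gradient on $\{\zeta=0\}$, so the normalized gradient is smooth on a tubular neighborhood and can be cut off by a bump function; you have correctly identified that the nonvanishing-gradient hypothesis is used precisely here.
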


\section{Gradient-based \ac{STL} control framework} \label{section:framework}

This section presents a framework for different gradient-based control approaches to solving \ac{STL} tasks, relating to earlier work using the \acs{PPC} and barrier function methods \cite{lindemann2017prescribed, lindemann2019control}. Intuitively, the system (\ref{eq:system}) only needs to be controlled when the robustness measure nears the specification curve $\gamma(t)$ in order to guarantee the desired $\rho^{\psi}(\x(t)) \ge \gamma(t)$. This motivates the following definitions.
\begin{definition}[Region of interest] \label{def:roi}
	Let $\varGamma(t)$ be a smooth curve for which $\varGamma(t) \ge \gamma(t) + \epsilon$ for all $t \ge 0$ and some $\epsilon > 0$. The \textit{region of interest} $\mathcal{X}(t)$ at time $t$ is defined as:
	\begin{equation}
	\mathcal{X}(t) := \left\{\x \inReal[n] : \gamma(t) \le \rho^{\psi}(\x) \le \varGamma(t)\right\}.
	\end{equation}
	The upper and lower boundaries of this region are denoted by the two sets $\bar{\mathcal{X}}(t) := \left\{\x \inReal[n] : \rho^{\psi}(\x) = \varGamma(t)\right\}$ and $\lbar{\mathcal{X}}(t) := \left\{\x \inReal[n] : \rho^{\psi}(\x) = \gamma(t)\right\}$. We also introduce the \textit{uncontrolled region} $\mathcal{A}(t) := \left\{\x \inReal[n] : \rho^{\psi}(\x) > \varGamma(t)\right\}$.
\end{definition}

\begin{definition}[Local robustness satisfaction]
	Let the system (\ref{eq:system}) be controlled by $\u = \u(\x,t)$. This control law is said to locally satisfy the robustness specification $\rho^{\psi}(\x(t)) \ge \gamma(t)$ in a domain $\mathcal{D} \subseteq \inReal*[n]$ if, for any initial $\x(\tau) \in \mathcal{D}$ such that $\rho^{\psi}(\x(\tau)) \ge \gamma(\tau)$, there exists a time $\delta > 0$ for which $\rho^{\psi}(\x(t)) \ge \gamma(t)$ holds during the interval $t \in [\tau, \tau+\delta]$. 
\end{definition}

\subsection{General control law design}

Let us examine the temporal behavior of the robustness measure $\rho^{\psi}(\x)$ that is to be controlled for the system (\ref{eq:system}):
\begin{equation} \label{eq:rhoDot} \hspace{-1mm}
\dot{\rho}^{\psi}(\x) = \dfrac{\partial \rho^{\psi}(\x)}{\partial \x} \x<\dot> = \underbrace{\dfrac{\partial \rho^{\psi}(\x)}{\partial \x} \left(f(\x) + \w\right)}_{\dot{\rho}^{\psi}_{fw}(\x, \w)} + \underbrace{\dfrac{\partial \rho^{\psi}(\x)}{\partial \x} g(\x) \u}_{\dot{\rho}^{\psi}_{u}(\x)},
\end{equation}
where $\dot{\rho}^{\psi}_{u}(\x)$ denotes the term influenced by $\u$, as implied by the subscript.

For developing our framework, in this section we consider the case of simple system dynamics that essentially allow direct control over the evolution of $\rho^{\psi}(\x)$. To ease notation, define
\begin{equation} \label{eq:vDef}
	\v(\x)\tp := \dfrac{\partial \rho^{\psi}(\x)}{\partial \x} g(\x)
\end{equation}
by which we can simply express $\dot{\rho}^{\psi}_{u}(\x)$ as $\v(\x)\tp \u$.
\begin{assumption} \label{assumption:controllability}
	For the term $\v(\x)$, we have:
	\begin{equation} \label{eq:controllability}
		\v(\x) \ne \nvec, \quad \forall \x : \exists t\ \st \ \x \in \mathcal{X}(t).
	\end{equation}
\end{assumption}

\begin{remark}
	The derivations in \cite{lindemann2017prescribed} consider the assumptions $g(\x) g(\x)\tp > 0$, $\rho^{\psi}(\x)$ being concave with optimum $\rho^{\psi}_{\text{opt}}$, and $\varGamma(t) < \rho^{\psi}_{\text{opt}}$. These form a subset of Assumption \ref{assumption:controllability}. Since $g(\x) g(\x)\tp > 0$, $g(\x)$ is full row rank and thus $\v(\x)$ can become zero if and only if $\frac{\partial \rho^{\psi}(\x)}{\partial \x} = \nvec$. This gradient is non-zero for all $\x$ for which $\rho^{\psi}(\x) \ne \rho^{\psi}_{\text{opt}}$ as $\rho^{\psi}(\x)$ is concave. Thus, (\ref{eq:controllability}) holds for all $\x \in \mathcal{X}(t)$ as $\varGamma(t) < \rho^{\psi}_{\text{opt}}$.
\end{remark}

\begin{theorem} \label{theorem:localSatisfaction}
Let Assumptions \ref{assumption:general} and \ref{assumption:controllability} hold. Define
\begin{equation} \label{eq:baseControlFamily}
\u(\x,t) := \begin{cases}
	\nvec \qquad &\text{if } \x \in \mathcal{A}(t), \\
	\kappa(\x, t) \dfrac{K}{\norm[2]{\v(\x)}^{2} + \varDelta}\v(\x) &\text{if } \x \notin \mathcal{A}(t),
\end{cases}
\end{equation}
where the coefficient $\kappa(\x, t) \ge 0$ is continuous in $t$, locally Lipschitz in $\x$, and satisfies (i) $\kappa(\x, t) \ge \dot{\gamma}(t) + B(\x)$ with $B(\x) \ge -\dfrac{\partial \rho^{\psi}(\x)}{\partial \x} f(\x) + \max_{\w}\norm[2]{\dfrac{\partial \rho^{\psi}(\x)}{\partial \x} \w}$ for all $\x \in \lbar{\mathcal{X}}(t)$ and (ii) $\kappa(\x, t) = 0$ for all $\x \in \bar{\mathcal{X}}(t)$. Then, with a proper choice of the additional parameters $K \ge 1$ and $\varDelta \ge 0$, this control law achieves local robustness satisfaction of the specification $\rho^{\psi}(\x(t)) \ge \gamma(t)$ for the system (\ref{eq:system}) in the entire domain $\inReal*[n]$.
\end{theorem}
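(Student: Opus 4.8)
The plan is to combine a local well-posedness argument with a set-invariance argument based on the generalized Nagumo theorem (Lemma~\ref{lemma:nagumo}), applied to the time-varying set $S(t) = \{\x \inReal[n] : \zeta(\x,t) \le 0\}$ with $\zeta(\x,t) := \gamma(t) - \rho^{\psi}(\x)$. Note that $\x \in S(t)$ is exactly $\rho^{\psi}(\x) \ge \gamma(t)$ and that $\{\x : \zeta(\x,t) = 0\} = \lbar{\mathcal{X}}(t)$. Fixing an arbitrary initial state $\x(\tau)$ with $\rho^{\psi}(\x(\tau)) \ge \gamma(\tau)$, i.e.\ $\x(\tau) \in S(\tau)$, I would show that the closed-loop trajectory satisfies $\x(t) \in S(t)$ on some interval $[\tau,\tau+\delta]$; this is precisely local robustness satisfaction, and since $\x(\tau)$ is arbitrary in $\inReal*[n]$ it establishes the claim over the whole state space.

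First I would settle well-posedness. If $\x(\tau) \in \mathcal{A}(\tau)$ the claim is almost immediate: $\mathcal{A}(\tau)$ is open, $\u\equiv\nvec$ there, so the closed loop locally reduces to $\dot{\x} = f(\x)+\w$, which has a unique local solution by Lemma~\ref{lemma:odeSolution}, and since $\rho^{\psi}(\x(\tau)) > \varGamma(\tau) > \gamma(\tau)$ the strict inequality $\rho^{\psi}(\x(t)) > \gamma(t)$ persists for small $\delta$ by continuity. If $\x(\tau)\notin\mathcal{A}(\tau)$ then $\gamma(\tau) \le \rho^{\psi}(\x(\tau)) \le \varGamma(\tau)$, so $\x(\tau) \in \mathcal{X}(\tau)$ and Assumption~\ref{assumption:controllability} gives $\v(\x(\tau))\ne\nvec$. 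On a small enough ball $\mathcal{W}$ around $\x(\tau)$, $\norm{\v(\x)}$ stays bounded away from zero, so the controlled branch of (\ref{eq:baseControlFamily}) is a product of locally Lipschitz functions (by Assumption~\ref{assumption:general}(i) and (\ref{eq:vDef})) and hence locally Lipschitz on $\mathcal{W}$; moreover, since $\kappa$ vanishes on $\bar{\mathcal{X}}(t)$ by condition (ii) and $\rho^{\psi}$ is continuous, the two branches of (\ref{eq:baseControlFamily}) agree on $\bar{\mathcal{X}}(t)$ and the patched law is locally Lipschitz in $\x$ across this interface (a straight-line intermediate-value argument bounds $\kappa(\x)$ by $\mathrm{dist}(\x,\bar{\mathcal{X}}(t))$ times its Lipschitz constant) and piecewise continuous in $t$. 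Lemma~\ref{lemma:odeSolution} then yields a unique solution on some $[\tau,\tau+\delta]$, which by shrinking $\delta$ stays in $\mathcal{W}$.

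It remains to verify the hypotheses of Lemma~\ref{lemma:nagumo} on $\lbar{\mathcal{X}}(t)$. Since $\lbar{\mathcal{X}}(t) \subset \mathcal{X}(t)$, Assumption~\ref{assumption:controllability} gives $\v(\x) \ne \nvec$, hence $\frac{\partial \rho^{\psi}(\x)}{\partial \x} \ne \nvec$, i.e.\ $\frac{\partial \zeta(\x,t)}{\partial \x}\ne\nvec$. For the derivative condition, on $\lbar{\mathcal{X}}(t)$ we have $\rho^{\psi}(\x) = \gamma(t) < \varGamma(t)$, so $\x\notin\mathcal{A}(t)$ and the controlled branch is active; inserting it into (\ref{eq:rhoDot}) and using $\v(\x)\tp\v(\x) = \norm[2]{\v(\x)}^2$ yields
\[
\dot{\zeta}(\x,t) = \dot{\gamma}(t) - \frac{\partial \rho^{\psi}(\x)}{\partial \x}\big(f(\x)+\w\big) - \kappa(\x,t)\,\frac{K\,\norm[2]{\v(\x)}^2}{\norm[2]{\v(\x)}^2 + \varDelta}.
\]
Choosing the parameters so that $q(\x) := K\norm[2]{\v(\x)}^2 / (\norm[2]{\v(\x)}^2 + \varDelta) \ge 1$ on $\lbar{\mathcal{X}}(t)$ --- which holds, for instance, for $\varDelta = 0$ and any $K\ge1$ since $\v(\x)\ne\nvec$ there --- and using $\kappa(\x,t)\ge0$ together with condition (i) and $B(\x) \ge -\frac{\partial \rho^{\psi}(\x)}{\partial \x}f(\x) + \max_{\w}\norm[2]{\frac{\partial \rho^{\psi}(\x)}{\partial \x}\w} \ge -\frac{\partial \rho^{\psi}(\x)}{\partial \x}(f(\x)+\w)$, one gets $\kappa(\x,t)\,q(\x) \ge \kappa(\x,t) \ge \dot{\gamma}(t) + B(\x) \ge \dot{\gamma}(t) - \frac{\partial \rho^{\psi}(\x)}{\partial \x}(f(\x)+\w)$, so that $\dot{\zeta}(\x,t) \le 0$. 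Lemma~\ref{lemma:nagumo} then gives $\x(t)\in S(t)$, i.e.\ $\rho^{\psi}(\x(t))\ge\gamma(t)$, for all $t\in[\tau,\tau+\delta]$, which is the claimed local robustness satisfaction.

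The main obstacle I anticipate is the well-posedness step rather than the invariance inequality: one must check carefully that the piecewise law (\ref{eq:baseControlFamily}) is genuinely locally Lipschitz --- both across the switching surface $\bar{\mathcal{X}}(t)$ (relying on $\kappa$ vanishing there, condition (ii)) and in a neighbourhood of the initial state where the denominator $\norm[2]{\v(\x)}^2 + \varDelta$ must not degenerate, which is exactly where Assumption~\ref{assumption:controllability} is used when $\varDelta = 0$. Once a unique local solution is in hand, the role of the free parameters $K\ge1$ and $\varDelta\ge0$ is merely to secure $q(\x)\ge1$ on the lower boundary, after which the satisfaction guarantee reduces to the lower bound (i) on $\kappa$.
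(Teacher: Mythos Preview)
Your proposal is correct and follows essentially the same approach as the paper: local existence via Lemma~\ref{lemma:odeSolution} (checking Lipschitz continuity of the piecewise law, with Assumption~\ref{assumption:controllability} keeping the denominator bounded away from zero and condition~(ii) on $\kappa$ ensuring continuity across $\bar{\mathcal{X}}(t)$), followed by the invariance argument via Lemma~\ref{lemma:nagumo} applied to $\zeta(\x,t)=\gamma(t)-\rho^{\psi}(\x)$, with the parameter condition $q(\x)\ge 1$ (equivalently the paper's $(K-1)v_{\min}^2\ge\varDelta$) and the bound~(i) on $\kappa$ yielding $\dot{\zeta}\le 0$ on $\lbar{\mathcal{X}}(t)$. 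You are in fact slightly more explicit than the paper in verifying the non-degeneracy hypothesis $\partial\zeta/\partial\x\ne\nvec$ of Lemma~\ref{lemma:nagumo} and in arguing Lipschitz matching at the switching surface, but the structure and key estimates are the same.
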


\begin{proof}
	Let the system at time $\tau$ be at a state $\x(\tau)$ for which $\rho^{\psi}(\x(\tau)) \ge \gamma(\tau)$. To prove local robustness satisfaction, we show that under the defined control law a unique solution exists for which $\rho^{\psi}(\x(t)) \ge \gamma(t)$ and remains satisfied for some period of time. For the former, in order to apply Lemma \ref{lemma:odeSolution}, we must show that there exists a closed ball around $\x(\tau)$ and $\tau$ within which $\u(\x, t)$ is Lipschitz continuous in $\x$ and piecewise continuous in $t$. Then the same holds for $f(\x) + g(\x)\u(\x, t) + \w(t)$, the right hand side of (\ref{eq:system}), due to Assumption \ref{assumption:general} (i) and (ii), and the lemma can be applied. 
	
	Piecewise continuity in $t$ trivially holds due to the continuity of $\kappa(\x, t)$ and $\mathcal{A}(t)$ in $t$. The Lipschitz condition also holds trivially for any $\x \in \mathcal{A}(t)$ where the control is defined to be zero. If $\x(\tau) \notin \mathcal{A}(t)$, then we must have $\x(\tau) \in \mathcal{X}(\tau)$ for which $\norm[2]{\v(\x(\tau))} \ge v_{\text{min}}$ for some $v_{\text{min}} > 0$ by the extreme value theorem and Assumption \ref{assumption:controllability}. Thus, as $\v(\x)$ is continuous, there exists a closed ball $\mathcal{B}$ around $\x(\tau)$ in which $\norm[2]{\v(\x)}$ is nonzero. Furthermore, as $\v(\x)$ and $\kappa(\x,t)$ are locally Lipschitz, the control action (\ref{eq:baseControlFamily}) is also Lipschitz in $\mathcal{B}$ (even in the case $\varDelta = 0$ as $\norm[2]{\v(\x)} \ne 0$). The Lipschitz property of $\u(\x, t)$ is preserved at the boundary $\bar{\mathcal{X}}(t)$ where $\u$ is continuous. Therefore, Lemma \ref{lemma:odeSolution} is applicable and a unique solution exists for some time interval $t \in [\tau, \tau + \delta]$ from the initial condition $\x(\tau)$.
	
	The proof of local robustness satisfaction is completed by showing that during this time $\rho^{\psi}(\x(t)) \ge \gamma(t)$ remains true (for any time interval, in fact, for which a solution exists). A sufficient condition for this is given by extensions of Nagumo's Theorem (see Lemma \ref{lemma:nagumo}). Applying the lemma to the set defined as $S(t) = \left\{\x: \gamma(t) - \rho^{\psi}(\x) \le 0 \right\}$ yields the condition: \vspace{-2mm}
	\begin{equation} \label{eq:satisfactionCondition}
	\dot{\rho}^{\psi}(\x(t)) \ge \dot{\gamma}(t) \quad \text{if } \x \in \lbar{\mathcal{X}}(t),
	\end{equation} 
	which, if satisfied, implies that the trajectory of $\rho^{\psi}(\x(t))$, having started above $\gamma(t)$, cannot cross it, as desired. Let the controller parameters satisfy $(K - 1)v^2_{\text{min}} \ge \varDelta$, e.g., with $K = 1$ and $\varDelta = 0$. Then, as $\norm[2]{v(\x)} \ge v_{\text{min}}$, we also have $(K - 1)\norm[2]{v(\x)}^2 \ge \varDelta$ for all $\x \in \mathcal{X}(t)$, thus the inequality
	\begin{equation}
	\dfrac{K}{\norm[2]{v(\x)}^2 + \varDelta} \ge \dfrac{1}{\norm[2]{v(\x)}^2}
	\end{equation}
	holds in this set as well. Substituting the control law (\ref{eq:baseControlFamily}) at $\x \in \lbar{\mathcal{X}}(t)$ into the time derivative (\ref{eq:rhoDot}) of $\rho^{\psi}$, and using the imposed bounds on $\kappa(\x, t)$, we can show that Nagumo's condition is then satisfied at the required $\x \in \lbar{\mathcal{X}}(t)$ region:
	\begingroup
	\allowdisplaybreaks
	\begin{align*}
		\dot{\rho}^{\psi}(\x) &= \dfrac{\partial \rho^{\psi}(\x)}{\partial \x} \left(f(\x) + \w\right) + \v(\x)\tp \dfrac{\kappa(\x, t)K}{\norm[2]{\v(\x)}^{2} + \varDelta}\v(\x) \\
		&\ge \dfrac{\partial \rho^{\psi}(\x)}{\partial \x} \left(f(\x) + \w\right) + \dfrac{\kappa(\x, t)}{\norm[2]{\v(\x)}^{2}}\v(\x)\tp \v(\x) \\
		&\ge \dfrac{\partial \rho^{\psi}(\x)}{\partial \x} f(\x) - \max_{\w}\norm{\dfrac{\partial \rho^{\psi}(\x)}{\partial \x} \w} \\
		&\phantom{\ge}+ \dot{\gamma}(t) - \dfrac{\partial \rho^{\psi}(\x)}{\partial \x} f(\x) + \max_{\w}\norm{\dfrac{\partial \rho^{\psi}(\x)}{\partial \x} \w} \\
		&= \dot{\gamma}(t),
	\end{align*}
	\endgroup
	as was to be shown for local robustness satisfaction.
\end{proof}

\begin{theorem} \label{theorem:globalSatisfaction}
	Assume the evolution of the system (\ref{eq:system}) under a locally robustness satisfying control law is such that the state remains bounded. Then, under Assumption \ref{assumption:general}, the corresponding \ac{STL} task $\phi$ is also satisfied.
\end{theorem}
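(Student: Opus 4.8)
The plan is to bootstrap the \emph{local} robustness satisfaction guarantee of Theorem~\ref{theorem:localSatisfaction} into a global-in-time guarantee, and then invoke Assumption~\ref{assumption:general}(iii) to conclude task satisfaction. First I would establish that a solution of the closed-loop system exists for all $t \ge 0$. By hypothesis the state trajectory remains bounded, so it lies in some compact set; combined with the local Lipschitz continuity of $f$, $g$, $\rho^\psi$ and its gradient (Assumption~\ref{assumption:general}(i)) and the construction of $\u(\x,t)$ in~(\ref{eq:baseControlFamily}) — which, as shown in the proof of Theorem~\ref{theorem:localSatisfaction}, is locally Lipschitz in $\x$ and piecewise continuous in $t$ on a neighborhood of any reachable state — the right-hand side of~(\ref{eq:system}) satisfies the hypotheses of Lemma~\ref{lemma:globalSolution}. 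Hence a unique solution exists for all $t \ge 0$.

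Next I would argue that $\rho^\psi(\x(t)) \ge \gamma(t)$ holds for \emph{all} $t \ge 0$, not merely on a short interval. The key observation is that the Nagumo-type argument inside the proof of Theorem~\ref{theorem:localSatisfaction} actually shows $\dot\rho^\psi(\x) \ge \dot\gamma(t)$ at every point of the lower boundary $\lbar{\mathcal{X}}(t)$ along \emph{any} solution, for as long as that solution is defined — the restriction to $[\tau,\tau+\delta]$ came only from the local existence lemma, which we have now upgraded. So the set $S(t) = \{\x : \gamma(t) - \rho^\psi(\x) \le 0\}$ is forward invariant for all $t \ge 0$ by Lemma~\ref{lemma:nagumo}, and since $\x(0) = \x[0]$ satisfies $\rho^\psi(\x[0]) \ge \gamma(0)$ by Assumption~\ref{assumption:general}(iv), the trajectory stays in $S(t)$ forever. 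I would spell this out by noting that the sufficient boundary condition~(\ref{eq:satisfactionCondition}) has already been verified unconditionally, so it transfers verbatim to the infinite horizon once global existence is in hand. Finally, Assumption~\ref{assumption:general}(iii) states precisely that $\rho^\psi(\x(t)) \ge \gamma(t)$ for all $t$ guarantees satisfaction of $\phi$, which closes the argument.

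The main obstacle — and the only substantive gap to fill — is the passage from the finite-horizon conclusion of Theorem~\ref{theorem:localSatisfaction} to an infinite-horizon one. Local robustness satisfaction, as defined, only promises a possibly tiny $\delta$ depending on the initial condition, so a naive "concatenate intervals" argument risks a vanishing sequence of $\delta$'s (a finite escape in time). The boundedness hypothesis is exactly what rules this out: it confines the trajectory to a compact subset of $\inReal*[n]$, on which Lemma~\ref{lemma:globalSolution} gives existence for all time directly, bypassing any need to control the length of successive intervals. Once that is settled, the invariance conclusion is immediate because the differential inequality $\dot\rho^\psi \ge \dot\gamma$ on the boundary was never local in nature — it was established pointwise. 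I would therefore structure the write-up as: (1) compact containment $\Rightarrow$ global existence via Lemma~\ref{lemma:globalSolution}; (2) the already-proven boundary inequality $\Rightarrow$ forward invariance of $S(t)$ via Lemma~\ref{lemma:nagumo}; (3) Assumption~\ref{assumption:general}(iii),(iv) $\Rightarrow$ $\phi$ is satisfied.
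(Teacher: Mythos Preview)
Your proposal is correct and follows essentially the same three-step structure as the paper's proof: (1) boundedness plus Lemma~\ref{lemma:globalSolution} gives global existence; (2) the local guarantee is promoted to $\rho^\psi(\x(t)) \ge \gamma(t)$ for all $t \ge 0$; (3) the design of $\gamma(t)$ from Assumption~\ref{assumption:general}(iii) yields satisfaction of $\phi$.

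The one noteworthy difference is in how you execute step (2). The paper argues directly from the \emph{definition} of local robustness satisfaction: once a global solution is known to exist, an infimum-of-violation-times argument combined with the local $\delta$-guarantee at that infimum gives the contradiction, without ever reopening the proof of Theorem~\ref{theorem:localSatisfaction}. You instead reach back into Theorem~\ref{theorem:localSatisfaction} to reuse the pointwise Nagumo inequality~(\ref{eq:satisfactionCondition}) and the specific controller~(\ref{eq:baseControlFamily}). Both routes are valid, but the paper's is slightly more general: Theorem~\ref{theorem:globalSatisfaction} as stated applies to \emph{any} locally robustness satisfying control law, not only to~(\ref{eq:baseControlFamily}), so tying your argument to the internals of Theorem~\ref{theorem:localSatisfaction} proves a narrower statement than the one claimed. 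If you replace your Nagumo step with the abstract argument---``let $T=\sup\{t:\rho^\psi(\x(s))\ge\gamma(s)\ \forall s\in[0,t]\}$; if $T<\infty$, continuity gives $\rho^\psi(\x(T))\ge\gamma(T)$, and local robustness satisfaction at $T$ extends the inequality past $T$, a contradiction''---you recover exactly the paper's generality.
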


\begin{proof}
	If the state remains bounded, a solution must exist for the entire time duration $t \ge t_0$ by Lemma \ref{lemma:globalSolution}. As the initial condition satisfies $\rho^\psi(\x(t_0)) \ge \gamma(t_0)$, by definition $\rho^\psi(\x(t)) \ge \gamma(t)$ must remain true for all $t \ge t_0$ since the control law is locally robustness satisfying. This in turn implies satisfaction of the task $\phi$ due to the design of the specification curve $\gamma(t)$. 
\end{proof}

Note that we do not require the controller (\ref{eq:baseControlFamily}) to guarantee the existence of a solution for all $t \ge t_0$. Indeed, suppose a robot needs to avoid collision with a stationary obstacle. This can be accomplished by using a locally robustness satisfying controller whose region of interest consists of points near the obstacle. Outside this region (in $\mathcal{A}(t)$), the controller allows the robot to evolve under its autonomous dynamics, where the system might have finite escape time. This choice is motivated by how we will aim to combine controllers from various robustness specifications. These would interfere more with each other if they were aiming to maintain a system solution outside their respective regions of interest. Keeping the state bounded to guarantee the existence of a global solution can simply be viewed as an added task specification.

\begin{corollary} \label{corollary:disjointConjunctions}
	Consider the conjunction of $M$ specifications $\rho^{\psi_{(i)}}(\x(t)) \ge \gamma_{(i)}(t)$ whose overall local robustness satisfaction guarantees that the system state remains bounded. Furthermore, assume that the specification curves $\gamma_{(i)}(t)$ and $\varGamma_{(i)}(t)$ are such that their defined regions of interest are mutually disjoint, i.e. $\mathcal{X}_{(i)} \cap \mathcal{X}_{(j)} = \emptyset$ for any $i,j \in {1,\dots,M}, i \ne j$. Then, for any control laws $\u[(i)](\x, t)$ that achieve local robustness satisfaction of the individual specifications, i.e., $\rho^{\psi_{(i)}}(\x(t)) \ge \gamma_{(i)}(t)$, the overall control $\u(\x, t) = \sum_{i=1}^{M} \u[(i)](\x, t)$ guarantees global robustness satisfaction of their conjunction.
\end{corollary}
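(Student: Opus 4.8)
The plan is to show that the summed controller keeps \emph{every} individual specification satisfied for all $t \ge 0$, and then to let Assumption~\ref{assumption:general}(iii) upgrade this to satisfaction of the conjoined task. First I would note that the closed-loop vector field $f(\x) + g(\x)\sum_{i=1}^{M} \u[(i)](\x,t) + \w$ is again locally Lipschitz in $\x$ and piecewise continuous in $t$, being a finite sum of such terms under Assumption~\ref{assumption:general}(i)--(ii); hence Lemma~\ref{lemma:odeSolution} yields a unique local solution from the given $\x[0]$, which satisfies $\rho^{\psi_{(i)}}(\x[0]) \ge \gamma_{(i)}(0)$ for every $i$ by Assumption~\ref{assumption:general}(iv).

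The heart of the argument is that the interference between the controllers vanishes exactly where it matters. I would track the time-varying set $\mathcal{S}(t) := \bigcap_{i=1}^{M}\{\x : \rho^{\psi_{(i)}}(\x) \ge \gamma_{(i)}(t)\}$, observe that $\x[0] \in \mathcal{S}(0)$, and suppose for contradiction that the solution leaves $\mathcal{S}(\cdot)$ at some first time $t^{\star}$. By continuity, at $t^{\star}$ some specification $i^{\star}$ sits exactly on its lower boundary, $\x(t^{\star}) \in \lbar{\mathcal{X}}_{(i^{\star})}(t^{\star})$, while $\rho^{\psi_{(j)}}(\x(t^{\star})) \ge \gamma_{(j)}(t^{\star})$ still holds for every $j$. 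Since $\x(t^{\star}) \in \mathcal{X}_{(i^{\star})}(t^{\star})$ and the regions of interest are pairwise disjoint, $\x(t^{\star}) \notin \mathcal{X}_{(j)}(t^{\star})$ for $j \ne i^{\star}$; together with $\rho^{\psi_{(j)}}(\x(t^{\star})) \ge \gamma_{(j)}(t^{\star})$ this forces $\rho^{\psi_{(j)}}(\x(t^{\star})) > \varGamma_{(j)}(t^{\star})$, i.e. $\x(t^{\star}) \in \mathcal{A}_{(j)}(t^{\star})$, so $\u[(j)](\x(t^{\star}),t^{\star}) = \nvec$ by the form (\ref{eq:baseControlFamily}) of the controllers. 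Hence at $\x(t^{\star})$ the summed control equals $\u[(i^{\star})](\x(t^{\star}),t^{\star})$ alone, and the very computation carried out in the proof of Theorem~\ref{theorem:localSatisfaction} gives $\dot{\rho}^{\psi_{(i^{\star})}}(\x(t^{\star})) \ge \dot{\gamma}_{(i^{\star})}(t^{\star})$, i.e. the Nagumo condition (\ref{eq:satisfactionCondition}) for specification $i^{\star}$ holds there. A standard invariance argument (Lemma~\ref{lemma:nagumo}, applied along the trajectory to $S_{(i^{\star})}(t) = \{\x : \gamma_{(i^{\star})}(t) - \rho^{\psi_{(i^{\star})}}(\x) \le 0\}$) then contradicts $t^{\star}$ being an exit time: $\rho^{\psi_{(i^{\star})}}(\x(t))$, having stayed above $\gamma_{(i^{\star})}(t)$ up to $t^{\star}$, cannot cross it there. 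So the solution remains in $\mathcal{S}(t)$ for as long as it exists; the boundedness hypothesis lets Lemma~\ref{lemma:globalSolution} extend it to all $t \ge 0$, so $\rho^{\psi_{(i)}}(\x(t)) \ge \gamma_{(i)}(t)$ for every $i$ and every $t$, and Assumption~\ref{assumption:general}(iii) applied to each task $\phi_{(i)}$ yields satisfaction of their conjunction.

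The step I expect to be the main obstacle is making this ``first exit time plus Nagumo'' reasoning clean. A single application of Lemma~\ref{lemma:nagumo} to one set $S_{(i)}(t)$ does not suffice, because the summed controller reduces to $\u[(i)]$ only on the part of $\lbar{\mathcal{X}}_{(i)}(t)$ where every other specification is still satisfied (so that $\x \in \mathcal{A}_{(j)}(t)$ and $\u[(j)] = \nvec$); on points of $\lbar{\mathcal{X}}_{(i)}(t)$ where some $\rho^{\psi_{(j)}} < \gamma_{(j)}(t)$ the term $\u[(j)]$ need not vanish, and disjointness gives no control over it. It is precisely the joint bookkeeping over all specifications --- tracking $\mathcal{S}(t)$ rather than each $S_{(i)}(t)$ in isolation --- combined with disjointness that eliminates the cross terms, and since $\mathcal{S}(t)$ is cut out by the nonsmooth constraint $\max_{i}\bigl(\gamma_{(i)}(t) - \rho^{\psi_{(i)}}(\x)\bigr) \le 0$, Lemma~\ref{lemma:nagumo} does not literally apply to $\mathcal{S}(t)$ and the invariance has to be argued directly. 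I would also verify the mild hypotheses of Lemma~\ref{lemma:nagumo} --- smoothness of each $\gamma_{(i)}(t) - \rho^{\psi_{(i)}}(\x)$ and $\partial \rho^{\psi_{(i)}}(\x)/\partial \x \ne \nvec$ on $\lbar{\mathcal{X}}_{(i)}(t)$, the latter following from Assumption~\ref{assumption:controllability} since $\v[(i)](\x)\tp = \tfrac{\partial \rho^{\psi_{(i)}}(\x)}{\partial \x} g(\x) \ne \nvec$ there.
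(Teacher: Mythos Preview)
Your proposal is correct and follows the same underlying idea as the paper: disjoint regions of interest mean that along any trajectory satisfying all specifications, at most one controller $\u[(i)]$ is nonzero, so the summed control inherits each individual controller's local robustness satisfaction, and boundedness plus Theorem~\ref{theorem:globalSatisfaction} then yields the global result. The paper compresses this into two lines by simply invoking Theorems~\ref{theorem:localSatisfaction} and~\ref{theorem:globalSatisfaction}, whereas you spell out the first-exit-time / Nagumo argument and correctly flag the subtlety that the inequality $\dot\rho^{\psi_{(i)}}\ge\dot\gamma_{(i)}$ can only be verified on $\lbar{\mathcal{X}}_{(i)}(t)\cap\mathcal{S}(t)$---a point the paper glosses over.
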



\begin{proof}
	The corollary follows directly from the independent regions of interest for the individual $\u_{(i)}$ control actions (i.e., at any time only a single one of them is nonzero) and the results of Theorems \ref{theorem:localSatisfaction} and \ref{theorem:globalSatisfaction}.
\end{proof}

\begin{remark}
	If the conjoined satisfaction of the $M$ specifications guarantees that the system state remains in some $\mathcal{D}$ domain, then Assumptions \ref{assumption:general} (i) and \ref{assumption:controllability} can be relaxed to hold for only the states $\x \in \mathcal{D}$.
\end{remark}

\begin{remark}
	Equation (\ref{eq:baseControlFamily}) defines a family of controllers based on the controller parameter $\kappa(\x, t)$. The choice $\kappa(\x,t) \rightarrow \infty$ as $\x \rightarrow \lbar{\mathcal{X}}(t)$ leads to an aggressive controller used in \cite{lindemann2017prescribed} and allows task satisfaction even if the dynamics $f(\x)$ and noise $\w$ are unknown. On the other hand, satisfying $\kappa(\x, t) \ge \dot{\gamma}(t) + B(\x)$ at $\x \in \lbar{\mathcal{X}}(t)$ by an exact equality is minimally invasive, but assumes full knowledge of the system dynamics. This is similar to the barrier function method described in \cite{lindemann2019control}, which even allows controllers for combined robustness specifications in the form of a single barrier function. The trade-off there appears in the nontrivial design of barrier functions and the added expense of computing the control actions through quadratic optimization.
	
	Many controllers lie in between these two outlined extremes. For example, an estimate $\bar{B}$ of the upper bound of $(\dot{\gamma}(t) + B(\x))$ could lead to $\kappa(\x, t) := \bar{B} e^{-\frac{\rho^{\psi}(\x) - \gamma(t)}{\varGamma(t) - \rho^{\psi}(\x)}}$. The aggressiveness of controller actions is mitigated, and explicit knowledge of the system dynamics $f(\x)$ is not required; however, depending on the estimate $\bar{B}$, task satisfaction guarantees could be lost. Such controllers were also used in \cite{varnai2019prescribedARXIV} and can be expected to be better combined due to their mitigated aggressiveness, thus aiding exploration more effectively when solving more complex \ac{STL} tasks using learning methods. Section \ref{section:practical} gives practical insights into how control actions from various robustness specifications can be combined into a single control action.
\end{remark}

\section{Extension to unicycle-type dynamics} \label{section:extension}

Our goal is to use the developed framework to devise locally task satisfying controllers for a wider range of system dynamics. The following example illustrates how control failure can occur even in the simple case of unicycle dynamics, motivating the extension studied in this paper.

\begin{example}[Unicycle navigation task] \label{example:1}	
	Consider a unicycle with state $\x = [x\ y\ \theta]\tp$, input $\u = [v\ \omega]\tp$, and dynamics:
	\begin{equation}
		\dot{x} = v\cos\theta, \qquad \dot{y} = v\sin{\theta}, \qquad \dot{\theta} = \omega.
	\end{equation}
	Aiming to navigate within a distance $r_g$ of a given goal $[x_g \ y_g]\tp$, a non-temporal formula $\psi$ is defined by the robustness measure $\rho^{\psi}(\x) = r_g - \norm[2]{\e[g]}$, where the target error is $\e[g] = [x - x_g \ \ y - y_g]\tp$. A temporal task is imposed as $\phi = \eventually[0][10]G\psi$. This temporal behavior is guaranteed if $\rho^{\psi}(\x(t)) \ge \gamma(t)$ for a curve $\gamma(t)$ which remains non-negative after some $t' \in [0, 10]$, i.e., the unicycle eventually always stays in the target region. The term $\v(\x)$ given by (\ref{eq:vDef}) in this case takes the form:
	\begin{equation*}
		\v(\x) = -\dfrac{1}{\norm[2]{\e[g]}}\bmat{\n\tp & 0 \\ \nvec & 1} \cdot \bmat{\e[g] \\ 0} = -\dfrac{1}{\norm[2]{\e[g]}}\bmat{\n\tp \e[g] \\ 0},
	\end{equation*}
	where $\n\tp =  [\cos(\theta)\ \sin(\theta)]$ is the heading direction of the unicycle. The first element and thus $\v(\x)$ can be zero for any $x$ and $y$ in case the error vector is perpendicular to $\n$, i.e., even when $\x \in \mathcal{X}(t)$, violating Assumption \ref{assumption:controllability}. Such a configuration could be avoided by properly changing $\theta$ in time using the input $\omega$; however, the second element of $\v(\x)$ is zero, so the derived controller (\ref{eq:baseControlFamily}) would not do so.
\end{example}

The exemplified controller failure motivates the following problem statement discussed in this section.

\begin{problem} \label{problem1}
	Consider the nonlinear system (\ref{eq:system}) with the following specific form (that also encompasses the unicycle):
	\begin{equation} \label{eq:unicycleSystem}
		\x<\dot> := \bmat{\x<\dot>[1] \\ \x<\dot>[2]} = \bmat{f_{1}(\x[1]) \\ f_{2}(\x)} + \bmat{g_{11}(\x[2]) & \NMAT \\ g_{21}(\x) & g_{22}(\x)} \bmat{\u[1] \\ \u[2]} + \bmat{\w[1] \\ \w[2]}.
	\end{equation}
	Determine a domain $\mathcal{D}$ and assumptions necessary for the local robustness satisfaction of $\rho^{\psi}(\x) \ge \gamma(t)$, and design a control law which achieves this, in case $\rho^{\psi}(\x)$ only depends on the state $\x[1]$ and with a slight abuse of notation can be written as $\rho^{\psi}(\x[1])$.
\end{problem}

\subsection{Controller design} \label{section:solution}

To begin our study of Problem \ref{problem1}, let us express the time derivative of the robustness metric $\rho^{\psi}$ for the system (\ref{eq:unicycleSystem}). 
\begin{equation} \label{eq:unicycledRho}
\dot{\rho}^{\psi}(\x[1]) = \dfrac{\partial \rho^{\psi}(\x[1])}{\partial \x[1]} \x<\dot>[1] = \dfrac{\partial \rho^{\psi}(\x[1])}{\partial \x[1]} (f_{1}(\x[1]) + \w[1]) + \v(\x)\tp \u[1],
\end{equation}
where the term $\v(\x)$ has been redefined following (\ref{eq:vDef}) as
\begin{equation} \label{eq:vDef2}
\v(\x)\tp := \dfrac{\partial \rho^{\psi}(\x[1])}{\partial \x[1]} g_{11}(\x[2]).
\end{equation}
The results for a controller of the form (\ref{eq:baseControlFamily}) are not applicable to calculate the control action $\u[1]$, because $\v(\x)$ may become zero in the region of interest $\mathcal{X}(t)$ of the robustness specification $\rho^{\psi}(\x[1](t)) \ge \gamma(t)$ (as highlighted by Example \ref{example:1} for the unicycle scenario).

The idea is to avoid $\v(\x) = \nvec$ using an augmented task $\phi_{\text{aug}} := G \psi_{\text{aug}}$, where $\psi_{\text{aug}}$ is the non-temporal specification of keeping $\v(\x)$ non-zero by some small predefined $v_{\text{min}} > 0$ value:
\begin{equation} \label{eq:augmentedSpec}
\rho^{\psi_{\text{aug}}}(\x) := \norm[2]{\v(\x)} - v_{\text{min}}.
\end{equation}
Suitable robustness specification curves for this \textit{always} type task could be the constant values $\gamma_{\text{aug}}(t) = 0$ and $\varGamma_{\text{aug}}(t) = \alpha > 0$ used herein. The augmented task is thus to keep $\rho^{\psi_{\text{aug}}}(\x(t)) \ge \gamma_{\text{aug}}(t)$. The region of interest defined by these curves according to Definition \ref{def:roi} is denoted by $\mathcal{X}_{\text{aug}}(t)$, i.e., $\mathcal{X}_{\text{aug}}(t) = \left\{\x \inReal[n] : \gamma_{\text{aug}}(t) \le \rho^{\psi_{\text{aug}}}(\x) \le \varGamma_{\text{aug}}(t)\right\}$. The quantities $\lbar{\mathcal{X}}_{\text{aug}}(t)$, $\bar{\mathcal{X}}_{\text{aug}}(t)$, and $\mathcal{A}_{\text{aug}}(t)$ follow Definition \ref{def:roi} as well. The notation for the prescribed curves $\gamma(t)$, $\varGamma(t)$, the region of interest $\mathcal{X}(t)$, and the uncontrolled region $\mathcal{A}(t)$ is kept in relation to the original formula $\psi$.

Note that if the conjoined specifications for $\psi$ and $\psi_{\text{aug}}$ are satisfied, then the system state is guaranteed to stay within $\mathcal{D} := \left\{\x : \exists t, \x \in (\mathcal{X} (t) \cup \mathcal{A} (t)) \cap (\mathcal{A}_{\text{aug}}(t) \cup \mathcal{X}_{\text{aug}}(t))\right\}$. By definition of $\mathcal{D}$ and the augmented task (\ref{eq:augmentedSpec}), we thus have:
\begin{equation} \label{eq:unicycleV}
\norm[2]{\v(\x)} \ge v_{\text{min}}, \quad \forall \x \in \mathcal{D} : \exists t\ \st\   \x \in \mathcal{X}(t).
\end{equation}

\begin{lemma} \label{lemma:phiSatisfaction}
	Assume that the control law $\u[2](\x, t)$ for input $\u[2]$ is Lipschitz continuous in $\x$ and piecewise continuous in $t$ in the region $\mathcal{D}$ and that Assumption \ref{assumption:general} holds. Define $\u[1]$ as:
	\begin{equation} \label{eq:unicycleControlu1}
	\u[1](\x,t) = \begin{cases}
	\nvec \qquad &\text{if } \x \in \mathcal{A}(t), \\
	 \dfrac{\kappa_1(\x[1], t) K}{\norm[2]{\v(\x)}^{2} + \varDelta}\v(\x) \qquad &\text{if } \x \notin \mathcal{A}(t),
	\end{cases}
	\end{equation}
	where the coefficient $\kappa_1(\x[1], t) \ge 0$ is continuous in $t$, locally Lipschitz in $\x$, and satisfies (i) $\kappa_1(\x[1], t) \ge \dot{\gamma}(t) + B_1(\x)$ with $B_1(\x) \ge -\dfrac{\partial \rho^{\psi}(\x[1])}{\partial \x[1]} f_{1}(\x[1]) + \max_{\w[1]}\norm[2]{\dfrac{\partial \rho^{\psi}(\x[1])}{\partial \x[1]} \w[1]}$ for all $\x \in \lbar{\mathcal{X}}(t)$, and (ii) $\kappa_1(\x[1], t) = 0$ for all $\x \in \bar{\mathcal{X}}(t)$. Then, with proper choice of $K \ge 1$ and $\varDelta \ge 0$, the controller is locally robustness satisfying for $\rho^{\psi}(\x[1](t)) \ge \gamma(t)$ in  $\mathcal{D}$ . 
\end{lemma}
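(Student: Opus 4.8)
The plan is to mirror the proof of Theorem~\ref{theorem:localSatisfaction} almost verbatim, using the restricted domain $\mathcal{D}$ to recover the nonvanishing of $\v(\x)$ that Assumption~\ref{assumption:controllability} supplied in the simple case. Concretely, fix a time $\tau$ and a state $\x(\tau)\in\mathcal{D}$ with $\rho^{\psi}(\x[1](\tau))\ge\gamma(\tau)$. Two things must be established: (a) under the combined feedback $\u=(\u[1](\x,t),\u[2](\x,t))$ a unique local solution of (\ref{eq:unicycleSystem}) exists on some interval $[\tau,\tau+\delta]$, and (b) along that solution $\rho^{\psi}(\x[1](t))\ge\gamma(t)$ holds.

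For (a), I would invoke Lemma~\ref{lemma:odeSolution}: it suffices to show the right-hand side of (\ref{eq:unicycleSystem}) is Lipschitz in $\x$ and piecewise continuous in $t$ on a closed ball around $(\x(\tau),\tau)$. Piecewise continuity in $t$ is immediate from continuity of $\kappa_1$, of $\mathcal{A}(t)$, of $\u[2]$ in $t$, and Assumption~\ref{assumption:general}(ii). For the Lipschitz property: $f_1,f_2,g_{11},g_{21},g_{22},\partial\rho^{\psi}/\partial\x[1]$ are locally Lipschitz by Assumption~\ref{assumption:general}(i); $\u[2]$ is locally Lipschitz in $\mathcal{D}$ by hypothesis; and $\u[1]$ is locally Lipschitz by the same argument as in Theorem~\ref{theorem:localSatisfaction} --- it is identically $\nvec$ on $\mathcal{A}(t)$, and on $\x\notin\mathcal{A}(t)$ we have $\x\in\mathcal{X}(t)$, where (\ref{eq:unicycleV}) gives $\norm[2]{\v(\x)}\ge v_{\text{min}}>0$, so the denominator $\norm[2]{\v(\x)}^2+\varDelta$ is bounded away from zero and the quotient is locally Lipschitz (including the case $\varDelta=0$); continuity, hence the Lipschitz property, is preserved across $\bar{\mathcal{X}}(t)$ since $\kappa_1=0$ there. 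One subtlety worth a sentence: if $\x(\tau)\in\mathcal{D}$ but lies on $\lbar{\mathcal{X}}(t)$, there is nothing extra to check, but if $\x(\tau)\notin\mathcal{D}$ the lemma does not apply --- this is fine, the claim is stated only for $\x(\tau)\in\mathcal{D}$, and I would note that the combined $(\psi,\psi_{\text{aug}})$ task keeping the state in $\mathcal{D}$ is exactly the mechanism that makes this hypothesis self-consistent.

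For (b), I would apply the generalized Nagumo condition (Lemma~\ref{lemma:nagumo}) to $S(t)=\{\x:\gamma(t)-\rho^{\psi}(\x[1])\le 0\}$, which reduces to showing $\dot{\rho}^{\psi}(\x[1](t))\ge\dot{\gamma}(t)$ whenever $\x\in\lbar{\mathcal{X}}(t)$. On $\lbar{\mathcal{X}}(t)\subseteq\mathcal{X}(t)\cap\mathcal{D}$ we again use (\ref{eq:unicycleV}) to get $\norm[2]{\v(\x)}\ge v_{\text{min}}$, choose $K,\varDelta$ with $(K-1)v_{\text{min}}^2\ge\varDelta$ (e.g.\ $K=1,\varDelta=0$) so that $K/(\norm[2]{\v(\x)}^2+\varDelta)\ge 1/\norm[2]{\v(\x)}^2$, substitute the $\u[1]$ branch of (\ref{eq:unicycleControlu1}) into (\ref{eq:unicycledRho}), and bound using condition (i) on $\kappa_1$:
\begin{align*}
\dot{\rho}^{\psi}(\x[1]) &= \dfrac{\partial\rho^{\psi}(\x[1])}{\partial\x[1]}(f_1(\x[1])+\w[1]) + \v(\x)\tp\dfrac{\kappa_1(\x[1],t)K}{\norm[2]{\v(\x)}^2+\varDelta}\v(\x) \\
&\ge \dfrac{\partial\rho^{\psi}(\x[1])}{\partial\x[1]}(f_1(\x[1])+\w[1]) + \kappa_1(\x[1],t) \\
&\ge \dfrac{\partial\rho^{\psi}(\x[1])}{\partial\x[1]}f_1(\x[1]) - \max_{\w[1]}\norm[2]{\dfrac{\partial\rho^{\psi}(\x[1])}{\partial\x[1]}\w[1]} + \dot{\gamma}(t) + B_1(\x) \\
&\ge \dot{\gamma}(t),
\end{align*}
where the last step uses the lower bound on $B_1(\x)$. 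This establishes Nagumo's condition, hence $\rho^{\psi}(\x[1](t))\ge\gamma(t)$ persists for as long as the solution exists, and in particular on $[\tau,\tau+\delta]$, giving local robustness satisfaction in $\mathcal{D}$.

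The main obstacle --- really the only place the argument differs in substance from Theorem~\ref{theorem:localSatisfaction} --- is justifying $\norm[2]{\v(\x)}\ge v_{\text{min}}$ on the relevant states. In the simple case this came for free from Assumption~\ref{assumption:controllability}; here it is not an assumption but a consequence of restricting attention to $\mathcal{D}$, which in turn is only invariant if the augmented task is simultaneously enforced. So the honest reading of this lemma is conditional: it presupposes a control $\u[2]$ (Lipschitz, piecewise continuous, handling $\psi_{\text{aug}}$) that keeps the state in $\mathcal{D}$; the existence and design of such a $\u[2]$ --- ensuring $\rho^{\psi_{\text{aug}}}(\x(t))\ge\gamma_{\text{aug}}(t)$ via the second input channel $g_{22}(\x)\u[2]$ --- is the genuinely new content and is presumably taken up after this lemma. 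Within the lemma itself, I would simply carry $\mathcal{D}$-membership as a standing hypothesis and lean on (\ref{eq:unicycleV}) wherever the earlier proof leaned on Assumption~\ref{assumption:controllability}.
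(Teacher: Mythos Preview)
Your proposal is correct and follows essentially the same approach as the paper's proof, which explicitly says it ``follows the same steps as that of Theorem~\ref{theorem:localSatisfaction}.'' The one refinement worth absorbing: the paper does not assume the closed ball around $\x(\tau)$ stays in $\mathcal{D}$, so instead of invoking (\ref{eq:unicycleV}) on the whole ball it uses continuity of $\v$ to extract a possibly smaller bound $0<v'_{\text{min}}\le v_{\text{min}}$ valid on that ball (and then shrinks $\delta$ so the trajectory retains $\norm[2]{\v(\x)}\ge v'_{\text{min}}$, taking $(K-1)v'^2_{\text{min}}\ge\varDelta$); your version implicitly assumes the ball lies in $\mathcal{D}$, which need not hold near $\partial\mathcal{D}$.
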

\begin{proof}
	The proof is similar and follows the same steps as that of Theorem \ref{theorem:localSatisfaction}. Let the system at time $\tau$ be at a state $\x(\tau) \in \mathcal{D}$. By definition of local robustness satisfaction, we assume $\rho^{\psi}(\x[1](\tau)) \ge \gamma(\tau)$ holds. Furthermore, we know that $\norm[2]{\v(\x(\tau))} \ge v_{\text{min}}$ as $\x(\tau) \in \mathcal{D}$. Due to the continuity of $\v(\x)$, there exists a closed ball around $\x(\tau)$ for which $\norm[2]{\v(\x)}$ is bounded from below by some $0 < v'_{\text{min}} \le v_{\text{min}}$ by the extreme value theorem. The input $\u[1](\x, t)$ thus satisfies the Lipschitz condition in this ball, as well as the input $\u[2](\x, t)$ by the assumption of the theorem. Thus, the entire system differential equation (\ref{eq:unicycleSystem}) satisfies the conditions of Lemma \ref{lemma:odeSolution}, implying that a unique solution exists within some $[\tau, \tau + \delta]$ time interval. As $\v(\x)$ changes continuously, this $\delta$ value can be chosen small enough such that $\norm[2]{\v(\x)} \ge v'_{\text{min}}$ remains true during the entire duration, which allows us to show $\rho^{\psi}(\x[1](t)) \ge \gamma(t)$ for all $t \in [\tau, \tau + \delta]$ along the same lines as in the previous theorem by invoking Lemma \ref{lemma:nagumo}. Indeed, in case the controller parameters are chosen to satisfy $(K - 1)v'^2_{\text{min}} \ge \varDelta$, for the time derivative of $\rho^{\psi}(\x[1])$ at the crucial $\x \in \lbar{\mathcal{X}}(t)$ states, we have:
	\begingroup
	\allowdisplaybreaks
	\begin{align*}
	\dot{\rho}^{\psi}&(\x[1]) = \dfrac{\partial \rho^{\psi}(\x[1])}{\partial \x[1]} \left(f_1(\x[1]) + \w[1]\right) + \dfrac{\partial \rho^{\psi}(\x[1])}{\partial \x[1]} g_{11}(\x[2]) \u[1] \\
	&=\dfrac{\partial \rho^{\psi}(\x[1])}{\partial \x[1]} \left(f_1(\x[1]) + \w[1]\right) + \v(\x)\tp \dfrac{\kappa_1(\x[1],t) K}{\norm[2]{\v(\x)}^{2} + \varDelta}\v(\x)\\
	&\ge \dfrac{\partial \rho^{\psi}(\x[1])}{\partial \x[1]} \left(f_1(\x[1]) + \w[1]\right) + \dfrac{\kappa_1(\x[1],t)}{\norm[2]{\v(\x)}^{2}}\v(\x)\tp \v(\x) \\
	&= \dfrac{\partial \rho^{\psi}(\x[1])}{\partial \x[1]} \left(f_1(\x[1]) + \w[1]\right) + \kappa_1(\x[1], t) \\
	&\ge \dot{\gamma}(t)
	\end{align*}
	\endgroup
	as required by the lemma, and the proof is complete.
\end{proof}

Our task is now to choose the control $\u[2]$ to satisfy the augmented robustness specification for $\psi_{\text{aug}}$ within $\mathcal{D}$. The time derivative of the corresponding robustness is given as:
\begin{align*}
\dot{\rho}^{\psi_{\text{aug}}}(\x) &= \dfrac{\partial \rho^{\psi_{\text{aug}}}(\x)}{\partial \x[1]} \x<\dot>[1] + \dfrac{\partial \rho^{\psi_{\text{aug}}}(\x)}{\partial \x[2]} \x<\dot>[2] \\
&= \dfrac{\v(\x)\tp}{\norm[2]{\v(\x)}} \left(\dfrac{\partial \v(\x)}{\partial \x[1]} \x<\dot>[1] + \dfrac{\partial \v(\x)}{\partial \x[2]} \x<\dot>[2]\right).
\end{align*}
After substituting in the dynamics for $\x<\dot>[1]$ and $\x<\dot>[2]$ from (\ref{eq:unicycleSystem}), this expression takes the general form:
\begin{equation*}
\dot{\rho}^{\psi_{\text{aug}}}(\x) = F(\x,\w) + G(\x)\u[1] + \v[\text{aug}](\x)\tp\u[2], 
\end{equation*}
where $F$ is composed of the unknown terms:
\begin{align*}
F(\x,\w) = &\dfrac{\v(\x)\tp}{\norm[2]{\v(\x)}} \left[\dfrac{\partial \v(\x)}{\partial \x[1]} \left(f_1(\x[1]) + \w[1]\right)\right. \\
&\left.{+}\dfrac{\partial \v(\x)}{\partial \x[2]} \left(f_2(\x) + \w[2]\right)\right],
\end{align*}
the coefficient of $\u[1]$ is
\begin{equation} \label{eq:Gdef}
G(\x) = \dfrac{\v(\x)\tp}{\norm[2]{\v(\x)}} \left[\dfrac{\partial \v(\x)}{\partial \x[1]} g_{11}(\x[2]) + \dfrac{\partial \v(\x)}{\partial \x[2]} g_{21}(\x)\right],
\end{equation}
and the coefficient of $\u[2]$ is given as:
\begin{equation} \label{eq:vAugDef}
\v[\text{aug}](\x)\tp = \dfrac{\v(\x)\tp}{\norm[2]{\v(\x)}} \dfrac{\partial \v(\x)}{\partial \x[2]} g_{22}(\x).
\end{equation}
\begin{assumption} \label{assumption:controllabilityu2}
	For the region of interest of the augmented robustness specification, we have:
	\begin{equation}
	\v[\text{aug}](\x) \ne \nvec, \quad \forall \x \in \mathcal{D} : \exists t\ \st \ \x \in \mathcal{X}_{\text{aug}}(t).
	\end{equation}
\end{assumption}

\begin{lemma} \label{lemma:unicycleTheorem}
	Assume that the controller $\u[1](\x, t)$ for $\u[1]$ is Lipschitz continuous in $\x$ and piecewise continuous in $t$ in domain $\mathcal{D}$, and that Assumptions \ref{assumption:general} and \ref{assumption:controllabilityu2} hold. Define the control law for $\u[2]$ as
	\begin{equation} \label{eq:unicycleControlu2}
	\u[2](\x,t) = \begin{cases}
	\nvec \qquad &\text{if } \x \in \mathcal{A}_{\text{aug}}(t), \\
	 \dfrac{\kappa_\text{aug}(\x, t) K_{\text{aug}} \v[\text{aug}](\x)}{\norm{\v[\text{aug}](\x)}^{2} + \varDelta_{\text{aug}}} \qquad &\text{if } \x \notin \mathcal{A}_{\text{aug}}(t),
	\end{cases}
	\end{equation}
	where the coefficient $\kappa_\text{aug}(\x, t) \ge 0$ is continuous and satisfies (i) $\kappa_\text{aug}(\x, t) \ge \dot{\gamma}_\text{aug}(t) - G(\x)\u[1] + B_\text{aug}(\x)$ with $B_\text{aug}(\x) \ge \max_{\w} \norm[2]{F(\x, \w)}$ for all $\x \in \lbar{\mathcal{X}}_{\text{aug}}(t)$, and (ii) $\kappa_\text{aug}(\x, t) = 0$ for all $\x \in \bar{\mathcal{X}}_{\text{aug}}(t)$. Then, by properly selecting $K_{\text{aug}} \ge 1$ and $\varDelta_{\text{aug}} \ge 0$, the control law (\ref{eq:unicycleControlu2}) achieves local robustness satisfaction of the augmented robustness specification $\rho^{\psi_{\text{aug}}}(\x(t)) \ge \gamma_{\text{aug}}(t)$ in the domain $\mathcal{D}$.
\end{lemma}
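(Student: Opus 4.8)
The plan is to follow the two-part template of the proof of Theorem~\ref{theorem:localSatisfaction}: first establish existence and uniqueness of a solution of (\ref{eq:unicycleSystem}) under the combined inputs $\u[1](\x,t)$ and $\u[2](\x,t)$ on a short interval, and then invoke Nagumo's theorem (Lemma~\ref{lemma:nagumo}) to show that this solution keeps $\rho^{\psi_{\text{aug}}}(\x(t)) \ge \gamma_{\text{aug}}(t)$ on that interval. Fix a time $\tau$ and a state $\x(\tau) \in \mathcal{D}$ with $\rho^{\psi_{\text{aug}}}(\x(\tau)) \ge \gamma_{\text{aug}}(\tau)$, which is the situation in which local robustness satisfaction has to be checked. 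Throughout, $\rho^{\psi_{\text{aug}}}$, $\gamma_{\text{aug}}$, $\v[\text{aug}]$, $K_{\text{aug}}$, $\varDelta_{\text{aug}}$, $\mathcal{X}_{\text{aug}}(t)$ and $\mathcal{A}_{\text{aug}}(t)$ play the roles that $\rho^{\psi}$, $\gamma$, $\v$, $K$, $\varDelta$, $\mathcal{X}(t)$ and $\mathcal{A}(t)$ played in Theorem~\ref{theorem:localSatisfaction}; the one genuinely new feature is the cross term $G(\x)\u[1]$ in $\dot{\rho}^{\psi_{\text{aug}}}(\x) = F(\x,\w) + G(\x)\u[1] + \v[\text{aug}](\x)\tp\u[2]$, which is exactly why the lower bound (i) on $\kappa_{\text{aug}}$ carries the extra $-G(\x)\u[1]$ summand.

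For the existence step, if $\x(\tau) \in \mathcal{A}_{\text{aug}}(\tau)$ then $\u[2] \equiv \nvec$ on a neighborhood of $\x(\tau)$ and is trivially Lipschitz; otherwise $\x(\tau) \in \mathcal{X}_{\text{aug}}(\tau) \subseteq \mathcal{D}$, so Assumption~\ref{assumption:controllabilityu2} gives $\v[\text{aug}](\x(\tau)) \ne \nvec$, and by continuity of $\v[\text{aug}]$ there is a closed ball about $\x(\tau)$, taken small enough to remain in $\mathcal{D}$, on which $\norm{\v[\text{aug}](\x)} \ge v_{\text{aug}}'$ for some $v_{\text{aug}}' > 0$. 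On that ball the denominator of (\ref{eq:unicycleControlu2}) is bounded away from zero and, since $\v[\text{aug}]$ and $\kappa_{\text{aug}}$ are locally Lipschitz in $\x$ as in Theorem~\ref{theorem:localSatisfaction}, so is $\u[2]$; the condition $\kappa_{\text{aug}} = 0$ on $\bar{\mathcal{X}}_{\text{aug}}(t)$ keeps $\u[2]$ continuous, hence Lipschitz, across the boundary with $\mathcal{A}_{\text{aug}}(t)$. Together with the assumed Lipschitz continuity of $\u[1]$ in $\mathcal{D}$ and the regularity in Assumption~\ref{assumption:general}, the right-hand side of (\ref{eq:unicycleSystem}) is Lipschitz in $\x$ and piecewise continuous in $t$ on this ball, so Lemma~\ref{lemma:odeSolution} yields a unique solution on some $[\tau, \tau + \delta]$; shrinking $\delta$ if necessary, $\norm{\v[\text{aug}](\x(t))} \ge v_{\text{aug}}'$ holds throughout.

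For the invariance step, I apply Lemma~\ref{lemma:nagumo} to $S_{\text{aug}}(t) = \left\{\x : \gamma_{\text{aug}}(t) - \rho^{\psi_{\text{aug}}}(\x) \le 0 \right\}$, whose boundary is $\lbar{\mathcal{X}}_{\text{aug}}(t)$. The nonvanishing-gradient hypothesis of the lemma holds there because $\rho^{\psi_{\text{aug}}} = \norm[2]{\v} - v_{\text{min}}$ and $\v$ depends on $\x[2]$ only through $g_{11}$, so by (\ref{eq:vAugDef}) we have $\v[\text{aug}](\x)\tp = \frac{\partial \rho^{\psi_{\text{aug}}}(\x)}{\partial \x[2]} g_{22}(\x)$; hence $\v[\text{aug}](\x) \ne \nvec$ forces $\frac{\partial \rho^{\psi_{\text{aug}}}(\x)}{\partial \x[2]} \ne \nvec$, i.e. $\frac{\partial \rho^{\psi_{\text{aug}}}(\x)}{\partial \x} \ne \nvec$, on $\lbar{\mathcal{X}}_{\text{aug}}(t) \subseteq \mathcal{X}_{\text{aug}}(t)$ by Assumption~\ref{assumption:controllabilityu2}. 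Nagumo's condition then reduces to $\dot{\rho}^{\psi_{\text{aug}}}(\x) \ge \dot{\gamma}_{\text{aug}}(t)$ on $\lbar{\mathcal{X}}_{\text{aug}}(t)$. Choosing $K_{\text{aug}} \ge 1$ and $\varDelta_{\text{aug}} \ge 0$ with $(K_{\text{aug}} - 1)(v_{\text{aug}}')^2 \ge \varDelta_{\text{aug}}$ (for instance $K_{\text{aug}} = 1$, $\varDelta_{\text{aug}} = 0$) gives $\frac{K_{\text{aug}}}{\norm{\v[\text{aug}](\x)}^{2} + \varDelta_{\text{aug}}} \ge \frac{1}{\norm{\v[\text{aug}](\x)}^{2}}$ on the ball, so substituting (\ref{eq:unicycleControlu2}) into $\dot{\rho}^{\psi_{\text{aug}}}(\x)$ makes the $\u[2]$-term at least $\kappa_{\text{aug}}(\x,t)$; then condition (i) cancels the $G(\x)\u[1]$ term and, using $B_{\text{aug}}(\x) \ge \max_{\w}\norm[2]{F(\x,\w)} \ge -F(\x,\w)$, dominates $F(\x,\w)$, leaving $\dot{\rho}^{\psi_{\text{aug}}}(\x) \ge \dot{\gamma}_{\text{aug}}(t)$. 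Hence the solution cannot exit $S_{\text{aug}}(t)$, i.e. $\rho^{\psi_{\text{aug}}}(\x(t)) \ge \gamma_{\text{aug}}(t)$ on $[\tau, \tau + \delta]$, which is exactly local robustness satisfaction of the augmented specification in $\mathcal{D}$.

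I expect the bulk of the work to be bookkeeping rather than a conceptual hurdle, since the skeleton is identical to Theorem~\ref{theorem:localSatisfaction}. The two places to be careful are (a) keeping the ball constructed in the existence step inside $\mathcal{D}$, so that the assumed Lipschitz continuity of $\u[1]$ and Assumptions~\ref{assumption:general} and~\ref{assumption:controllabilityu2} remain in force along the solution, and (b) the nonvanishing-gradient check for Lemma~\ref{lemma:nagumo}, which is the only new verification compared with Theorem~\ref{theorem:localSatisfaction} and which I handle via Assumption~\ref{assumption:controllabilityu2} and the factorization (\ref{eq:vAugDef}). It is also worth noting, though not strictly part of the claim, that a continuous $\kappa_{\text{aug}}$ satisfying (i)--(ii) exists whenever $G(\x)\u[1]$ and $\max_{\w}\norm[2]{F(\x,\w)}$ stay bounded over each $\lbar{\mathcal{X}}_{\text{aug}}(t)$, which follows from boundedness of the noise together with the regularity hypotheses on the data.
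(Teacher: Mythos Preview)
Your proposal is correct and follows essentially the same approach as the paper, which simply sketches that the proof mirrors Lemma~\ref{lemma:phiSatisfaction} (and hence Theorem~\ref{theorem:localSatisfaction}): apply Lemma~\ref{lemma:odeSolution} for local existence using Assumption~\ref{assumption:controllabilityu2} to bound $\norm{\v[\text{aug}](\x)}$ away from zero on a ball, then verify Nagumo's condition on $\lbar{\mathcal{X}}_{\text{aug}}(t)$ with the choice $(K_{\text{aug}}-1)v_{\text{aug,min}}^2 \ge \varDelta_{\text{aug}}$. You actually supply more detail than the paper---in particular the explicit cancellation of the $G(\x)\u[1]$ cross-term via condition~(i) and the nonvanishing-gradient check for Lemma~\ref{lemma:nagumo}---but the overall route is the same; one minor slip is the set inclusion ``$\mathcal{X}_{\text{aug}}(\tau) \subseteq \mathcal{D}$'', which need not hold, though this is harmless since you have already assumed $\x(\tau) \in \mathcal{D}$.
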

\begin{proof}
	The proof again follows exactly the same lines as that of Lemma \ref{lemma:phiSatisfaction}, first showing that Lemma \ref{lemma:odeSolution} is applicable within $\mathcal{D}$ and guarantees the existence of a unique solution for some period of time. Then, Lemma \ref{lemma:nagumo} is used to show that with $(K_{\text{aug}} - 1)v^2_{\text{aug,min}} \ge \varDelta_{\text{aug}}$ we will always have $\dot{\rho}^{\psi_{\text{aug}}}(\x(t)) \ge \dot{\gamma}_{\text{aug}}(t)$ at $\x \in \lbar{\mathcal{X}}_{\text{aug}}(t)$ due to the structure of the introduced control law for $\u[2]$, which in turn implies the desired local robustness satisfaction.
\end{proof}
This leads us to the main result of this section.
\begin{theorem} \label{theorem:unicycleTheorem}
	Let Assumptions \ref{assumption:general} and \ref{assumption:controllabilityu2} hold. Then, the control laws (\ref{eq:unicycleControlu1}) and (\ref{eq:unicycleControlu2}) together achieve local robustness satisfaction of the conjoined specification $\rho^{\psi}(\x[1](t)) \ge \gamma(t)$ and $\rho^{\psi_{\text{aug}}}(\x(t)) \ge \gamma_{\text{aug}}(t)$ within the domain $\mathcal{D}$.
\end{theorem}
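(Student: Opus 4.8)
The plan is to combine Lemma~\ref{lemma:phiSatisfaction} and Lemma~\ref{lemma:unicycleTheorem} by resolving the mutual dependency in their hypotheses: Lemma~\ref{lemma:phiSatisfaction} requires $\u[2](\x,t)$ to be Lipschitz in $\x$ and piecewise continuous in $t$ on $\mathcal{D}$, while Lemma~\ref{lemma:unicycleTheorem} requires the analogous regularity of $\u[1](\x,t)$. First I would verify the regularity of the closed-loop controls. The control law~(\ref{eq:unicycleControlu1}) is built from $\kappa_1(\x[1],t)$ (continuous in $t$, locally Lipschitz in $\x$), from $\v(\x)$ (locally Lipschitz, by Assumption~\ref{assumption:general}~(i) applied to $\frac{\partial \rho^{\psi}}{\partial \x[1]}$ and the Lipschitz continuity of $g_{11}$), and from the set indicator switching at $\mathcal{A}(t)$; exactly as argued inside the proof of Lemma~\ref{lemma:phiSatisfaction}, on any compact ball inside $\mathcal{D}$ the denominator $\norm[2]{\v(\x)}^2+\varDelta$ is bounded away from zero because $\norm[2]{\v(\x)}\ge v_{\text{min}}$ on the relevant states by~(\ref{eq:unicycleV}), so $\u[1]$ is locally Lipschitz in $\x$, continuous across the boundary $\bar{\mathcal{X}}(t)$ where $\kappa_1$ vanishes, and piecewise continuous in $t$. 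Symmetrically, $\u[2]$ from~(\ref{eq:unicycleControlu2}) is locally Lipschitz in $\x$ and piecewise continuous in $t$ on $\mathcal{D}$: the denominator $\norm{\v[\text{aug}](\x)}^2+\varDelta_{\text{aug}}$ is bounded below by Assumption~\ref{assumption:controllabilityu2} and the extreme value theorem on compact subsets, $\kappa_{\text{aug}}$ absorbs the term $G(\x)\u[1]$ which inherits the Lipschitz property of $\u[1]$, and $\kappa_{\text{aug}}$ vanishes on $\bar{\mathcal{X}}_{\text{aug}}(t)$ making $\u[2]$ continuous at that boundary.

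Having established that each control law satisfies the regularity hypothesis demanded by the \emph{other} lemma, I would invoke the two lemmas. Lemma~\ref{lemma:unicycleTheorem}, with $\u[1]$ playing the role of its admissible input, gives local robustness satisfaction of $\rho^{\psi_{\text{aug}}}(\x(t))\ge\gamma_{\text{aug}}(t)$ in $\mathcal{D}$ whenever $\rho^{\psi_{\text{aug}}}(\x(\tau))\ge\gamma_{\text{aug}}(\tau)$. Lemma~\ref{lemma:phiSatisfaction}, with $\u[2]$ as its admissible input, gives local robustness satisfaction of $\rho^{\psi}(\x[1](t))\ge\gamma(t)$ in $\mathcal{D}$ whenever $\rho^{\psi}(\x[1](\tau))\ge\gamma(\tau)$. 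The key observation making the combination work is that the closed-loop right-hand side of~(\ref{eq:unicycleSystem}) with \emph{both} $\u[1]$ and $\u[2]$ plugged in is locally Lipschitz in $\x$ and piecewise continuous in $t$ on $\mathcal{D}$ (again using Assumption~\ref{assumption:general}~(i),(ii) for $f_1,f_2,g_{ij}$ together with the just-shown regularity of the controls), so Lemma~\ref{lemma:odeSolution} yields a single common unique solution $\x(t)$ on some $[\tau,\tau+\delta]$; one does not get two different trajectories from the two lemmas. Shrinking $\delta$ if necessary so that both $\norm[2]{\v(\x(t))}\ge v'_{\text{min}}$ and $\norm[2]{\v[\text{aug}](\x(t))}\ge v'_{\text{aug,min}}$ persist over the interval, the Nagumo arguments inside the two lemma proofs apply verbatim to this one trajectory: $\dot\rho^{\psi}(\x[1](t))\ge\dot\gamma(t)$ on $\lbar{\mathcal{X}}(t)$ and $\dot\rho^{\psi_{\text{aug}}}(\x(t))\ge\dot\gamma_{\text{aug}}(t)$ on $\lbar{\mathcal{X}}_{\text{aug}}(t)$, so neither $\rho^{\psi}(\x[1](t))$ crosses $\gamma(t)$ nor $\rho^{\psi_{\text{aug}}}(\x(t))$ crosses $\gamma_{\text{aug}}(t)$ from above. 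Hence both specifications of the conjunction are locally satisfied in $\mathcal{D}$, which is precisely the claim.

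The main obstacle, and the point the proof must be careful about, is the circularity of the hypotheses: each lemma assumes regularity of the other's output. The resolution is that the two controllers are defined \emph{explicitly} by the closed forms~(\ref{eq:unicycleControlu1}) and~(\ref{eq:unicycleControlu2}) rather than implicitly, so their regularity can be checked directly and independently of any solution existence claim; only after that is established do we feed them into the lemmas. A secondary subtlety is that the admissibility of $\u[2]$ in Lemma~\ref{lemma:phiSatisfaction} and the admissibility of $\u[1]$ in Lemma~\ref{lemma:unicycleTheorem} must both hold on the \emph{same} domain $\mathcal{D}$ and that the definition of $\mathcal{D}$ is exactly what guarantees the lower bounds~(\ref{eq:unicycleV}) on $\norm[2]{\v(\x)}$ used throughout; this is why the theorem is stated with $\mathcal{D}$ rather than all of $\inReal*[n]$. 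Everything else is a direct citation of the two preceding lemmas applied to the common trajectory.
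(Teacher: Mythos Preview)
Your proposal is correct and follows essentially the same approach as the paper: invoke Lemmas~\ref{lemma:phiSatisfaction} and~\ref{lemma:unicycleTheorem} on the common closed-loop trajectory and intersect the two resulting time intervals. The paper's own proof is terser---it simply applies the two lemmas, obtains intervals $[\tau,\tau+\delta_1]$ and $[\tau,\tau+\delta_2]$, and takes $\delta=\min(\delta_1,\delta_2)$---whereas you are more explicit about resolving the mutual regularity hypotheses (each lemma assumes the other's control is Lipschitz), a point the paper leaves implicit; this extra care is sound but does not change the structure of the argument.
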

\begin{proof}
	Let $\x(\tau) \in \mathcal{D}$ be the state at time $\tau$ for which both specifications $\rho^{\psi}(\x(\tau)) \ge \gamma(\tau)$ and $\rho^{\psi_{\text{aug}}}(\x(\tau)) \ge \gamma_{\text{aug}}(\tau)$ are satisfied. Lemmas \ref{lemma:phiSatisfaction} and \ref{lemma:unicycleTheorem} individually guarantee the existence of a unique solution for finite times $[\tau, \tau + \delta_1]$ and $[\tau, \tau + \delta_2]$, with $\delta_1 > 0$ and $\delta_2 > 0$. The lemmas also guarantee local robustness satisfaction during this period for the two tasks, independently of one another. This implies that during the finite time interval $t \in [\tau, \tau + \delta]$, where $\delta = \min(\delta_1, \delta_2) > 0$, a unique solution exists and both specifications remain satisfied, as desired.
\end{proof}

\begin{remark}
	The state of the system is guaranteed to remain in $\mathcal{D}$ for any period of time for which a solution exists. This is readily seen as the controller is locally robustness satisfying for the temporal behaviors of $\psi$ and $\psi_\text{aug}$, which implies the state must remain in $\mathcal{D}$ due to its definition. The results for global task satisfaction from Theorem \ref{theorem:globalSatisfaction} and Corollary \ref{corollary:disjointConjunctions} using the obtained controller $\u$ thus continue to hold as the controller remains well-defined throughout time.
\end{remark}

\begin{example}[Unicycle navigation task - continued]
	The redefined term (\ref{eq:vDef2}) for $\v(\x)$ becomes $\v(\x) = -\norm[2]{\e[g]}^{-1}\left(\e[g]\tp \n\right)$, where the unicycle faces the $\n = [\cos\theta \ \sin\theta]\tp$ direction. The robustness measure for the augmented task $\phi_{\text{aug}}$ is given by $\rho^{\psi_{\text{aug}}}(\x) = \norm[2]{\v(\x)} - v_{\text{min}}$ accordingly. The coefficient $G(\x)$ in the time derivative of this term, given by (\ref{eq:Gdef}), becomes
	$G(\x) = -\norm[2]{\e[g]}^{-1}\norm[2]{\v(\x)}^{-1}\v(\x)\tp\left(1 - \norm[2]{\v(\x)}^{2}\right)$. The coefficient $\v[\text{aug}](\x)$, expressed in (\ref{eq:vAugDef}), takes the form:
	\begin{equation*}
	\v[\text{aug}](\x) = -\dfrac{1}{\norm[2]{\e[g]}} \dfrac{\v(\x)\tp}{\norm[2]{\v(\x)}} \e[g]\tp \n[\perp],
	\end{equation*}
	where $\n[\perp] = [-\sin\theta \ \cos\theta]\tp$ is perpendicular to the unicycle's direction.
	The terms $\v$ and $\v[\text{aug}]$ both become zero when the unicycle is perpendicular to the target error $\e[g]$. This case is excluded from the set $\mathcal{D}$ as $\rho^{\psi_{\text{aug}}} = -v_{\text{min}} < 0 = \gamma_{\text{aug}}(t)$ for such a case. The term $\v[\text{aug}]$ also becomes zero when the unicycle is parallel to the target error. In such a case, $\rho^{\psi_{\text{aug}}} = 1 - v_{\text{min}}$ and this can also be excluded from the region of interest $\mathcal{X}_{\text{aug}}(t)$ by an appropriate choice of $\alpha = \varGamma_{\text{aug}}(t) < 1-v_{\text{min}}$, ensuring Assumption \ref{assumption:controllabilityu2} is satisfied. When $\e[g] = \nvec$, the terms become ill-defined due to the divisions by $\norm[2]{\e[g]}$. Consider, on the other hand, the task of avoiding an obstacle. Then, the domain $\mathcal{D}$ does not contain the point where $\e[g] = \nvec$ as the obstacle should be avoided, and thus $\u[2](\x,t)$ is well-defined in $\mathcal{D}$ and the results of Theorem \ref{theorem:unicycleTheorem} for robustness satisfaction hold.
	
	From a practical point of view, the controller can also be used for reaching a target location as $\e[g] = \nvec$ is a measure zero set. (Theoretically, it should be combined with an arbitrarily small radius target avoidance to have global guarantees of task satisfaction). Sample trajectories for solving the \ac{STL} task outlined in Example \ref{example:1}, with $\kappa_1 = 2e^{-\frac{\rho^{\psi}(\x_1) - \gamma(t)}{\varGamma(t) - \rho^{\psi}(\x_1)}}$ and $\kappa_2 = (-G(\x)\u[1] + 20)e^{-\frac{\rho^{\psi_{\text{aug}}}(\x) - \gamma(t)}{\varGamma(t) - \rho^{\psi_{\text{aug}}}(\x)}}$ defining the controls (\ref{eq:unicycleControlu1}) and (\ref{eq:unicycleControlu2}), are shown in Fig. \ref{fig:unicycleSample}. The $K$ and $\varDelta$ parameters of the two controllers for $\u[1]$ and $\u[2]$ were set to 1 and 0. Process noise with covariance $\diag(0.5,\ 0.5,\ 5)$ was added to the system as a disturbance. The evolution of the robustness metrics is shown in Fig. \ref{fig:unicycleRobustness}.
\end{example}

\begin{figure}[h]
	\centering
	\includegraphics[width=.53\linewidth]{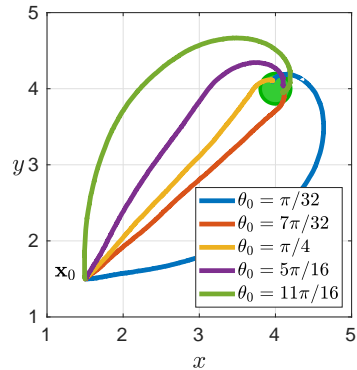}
	\caption{Sample trajectories for the unicycle navigation task example from various initial unicycle angles $\theta_0$.}
	\label{fig:unicycleSample}
\end{figure}
\vspace{-4mm}
\begin{figure}[h]
	\centering
	\begin{subfigure}[b]{0.23\textwidth}
		\centering
		\includegraphics[width=\textwidth]{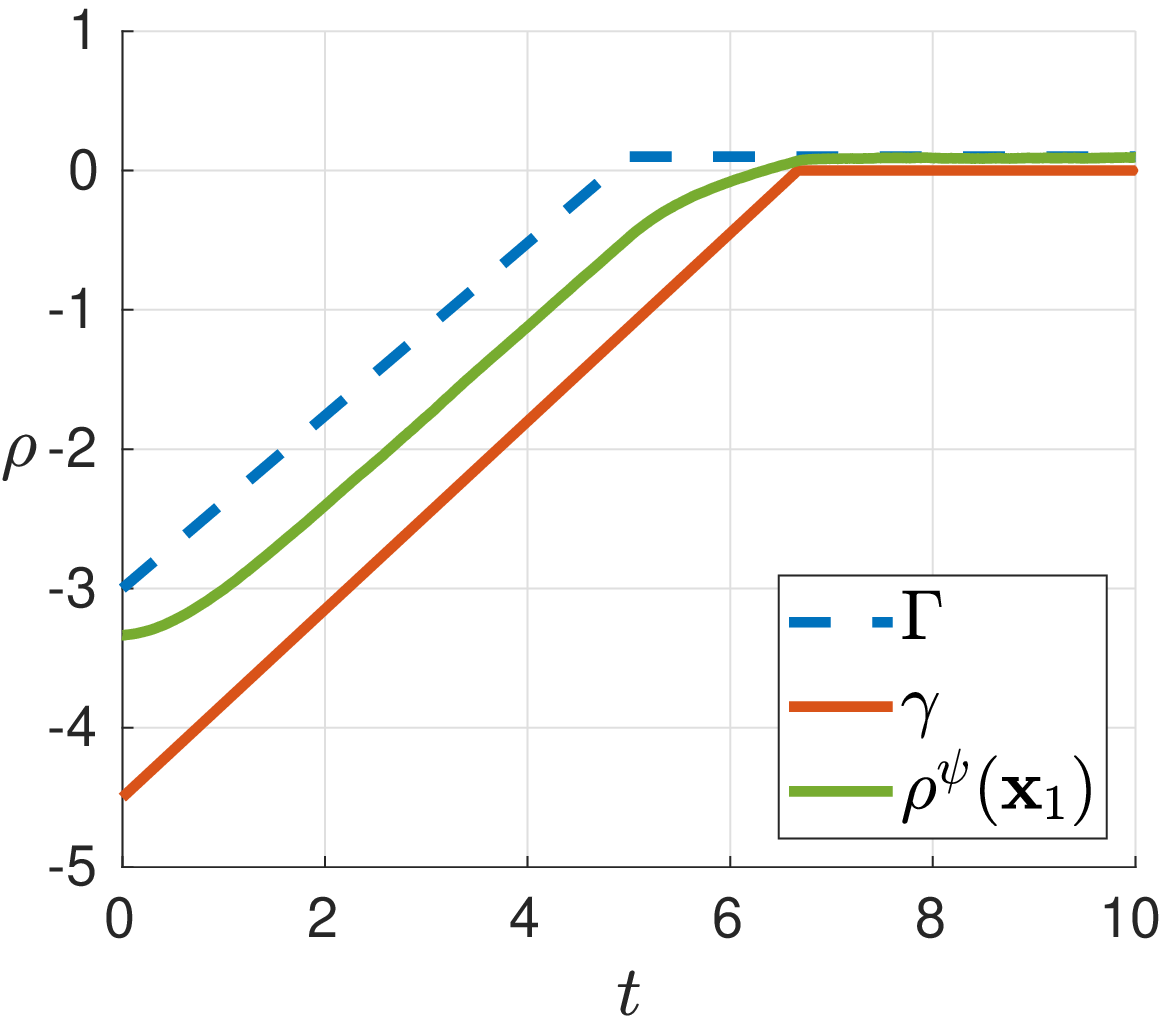}
	\end{subfigure}
	\begin{subfigure}[b]{0.23\textwidth}
		\centering
		\includegraphics[width=\textwidth]{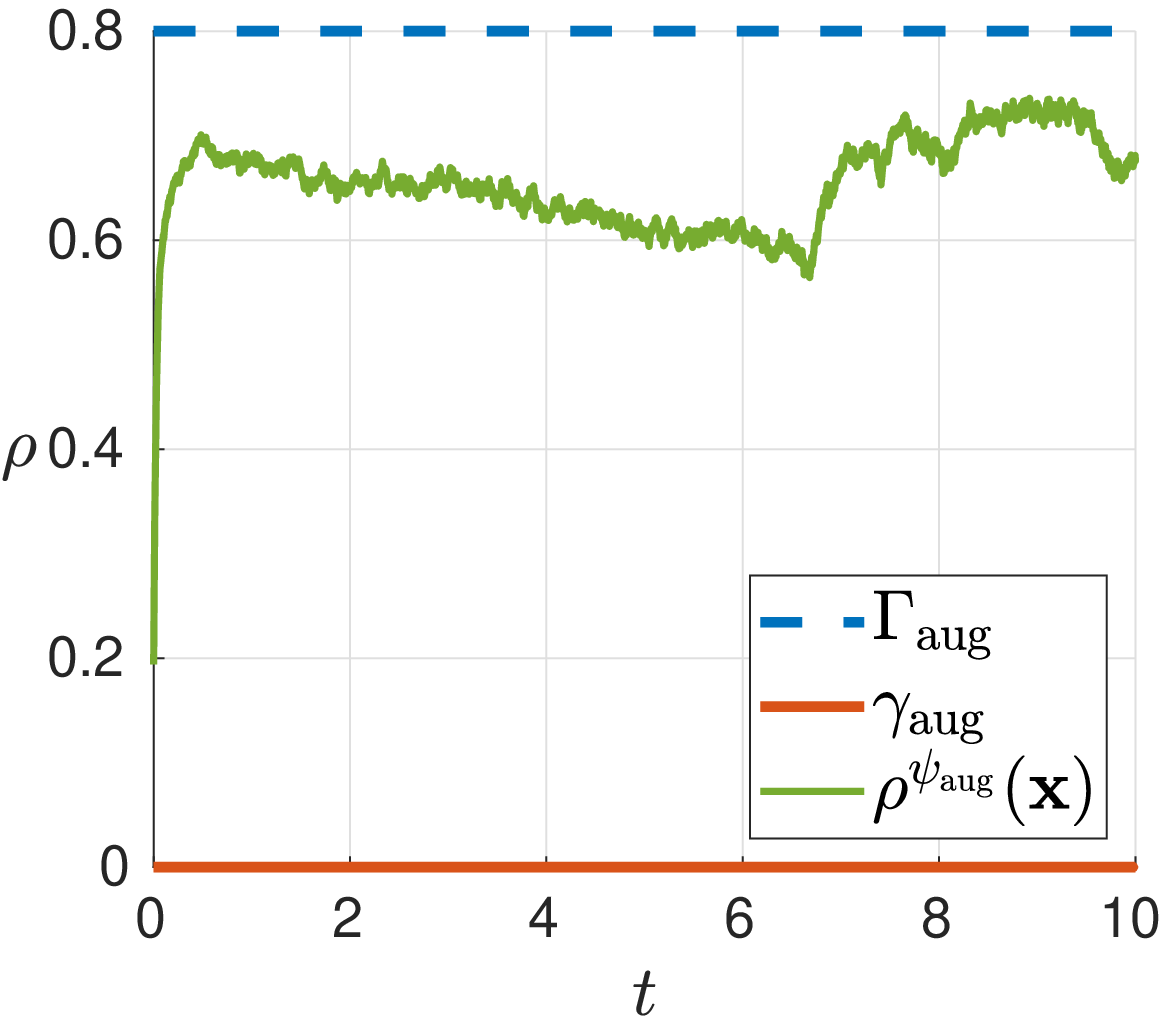}
	\end{subfigure}
	\caption{Evolution of robustness measures $\rho^{\psi}$ and $\rho^{\psi_\text{aug}}$ for the unicycle example in the sample case $\theta_0 = 11\pi/16$.}
	\label{fig:unicycleRobustness}
\end{figure}

\section{Combining controllers} \label{section:practical}

In this section, we examine the possibility and practicalities associated with using the derived controllers in combination with one another in order to extend the range of \ac{STL} task specifications we can satisfy. The motivation behind this is that controllers for a single robustness specification - \textit{elementary controllers} - are simple and inexpensive to calculate, and even though robustness satisfaction guarantees are lost by combining them, the result can still serve as a good guiding controller for solving tasks, as shown in \cite{varnai2019prescribedARXIV}. 

We propose an approach for combining elementary controllers for the generalized unicycle system. Practical considerations are also given to highlight some aspects of combining elementary controllers and to provide initial insight into a more in-depth study of this topic for future work. The take-aways are illustrated using a simple navigation task.

\subsection{Combining elementary controllers}

Consider a conjunction of $M$ specifications $\rho^{\psi_{(i)}}(\x[1](t)) \ge \gamma_{(i)}(t)$ for $i=1\dots M$; for all quantities, the subscript $(i)$ indicates association to the $i$-th specification. We assume these are elementary in the sense that each admits a locally robustness satisfying controller defined by $\u[1,(i)](\x,t)$ and $\u[2,(i)](\x,t)$. The individual controls $\u[1,(i)]$ can be intuitively combined by taking weighted average:
\begin{equation} \label{eq:combinedu1}
\u[1] := \dfrac{\sum_{i = 1}^{M} \alpha_{(i)} \u[1,(i)]}{\sum_{i = 1}^{M}\alpha_{(i)}}
\end{equation}
to determine a consensus for $\u[1]$, which directly influences the evolution of the different robustness metrics $\rho^{\psi_{(i)}}(\x[1](t))$. The weights are chosen such that higher priority is given as $\x \rightarrow \lbar{\mathcal{X}}_{(i)}(t)$, e.g., with $\alpha_{(i)} = \frac{\varGamma_{(i)}(t) - \rho^{\psi_{(i)}}(\x(t))}{\varGamma_{(i)}(t) - \gamma_{(i)}(t)}$ if $\rho^{\psi_{(i)}}(\x(t)) \le \varGamma_{(i)}(t)$ and $\alpha_{(i)} = 0$ otherwise. A similar scheme is then employed for deciding the input $\u[2]$, i.e., $\u[2] = \frac{\sum_{i = 1}^{M} \alpha_{(i)} \u[2,(i)]}{\sum_{i = 1}^{M}\alpha_{(i)}}$. The weights again serve to mainly exert control action from the elementary controller $i$ whose respective $\rho^{\psi_{(i)}}$ robustness measure is the most violating.

\subsection{Practical considerations}

An elementary controller gives satisfaction guarantees if its $\v_{(i)}(\x)$ term in (\ref{eq:vDef2}) remains non-zero. However, with a conjunction of $M$ specifications, this requirement might be overly restrictive to allow for feasible trajectories. For example, it might be physically impossible for a unicycle to pass by a circular obstacle without becoming perpendicular to it. Elementary controllers aiming to avoid such configurations might thus be working \textit{against} an actual feasible trajectory!

For simplicity, consider a single robustness specification for some formula $\psi$, easing the notation to drop the $(i)$ subscripts. If the input $\u[2]$ is not used to keep $\v(\x) \ne \nvec$, a natural idea is to use it to increase the robustness metric $\rho^{\psi}$ instead. Namely, $\u[2]$ appears in the second derivative of $\rho^{\psi}(\x[1])$, and could potentially be used to push the system towards increasing $\rho^{\psi}$.

It is instructive to examine how the second derivative of $\rho^{\psi}(\x[1])$ depends on the input $\u[2]$ under the derived control law (\ref{eq:unicycleControlu1}) for $\u[1]$. Towards this end, let us first rewrite the expression (\ref{eq:unicycledRho}) for the time derivative of $\rho^{\psi}(\x[1])$ in the form:
\begin{equation}
\dot{\rho}^{\psi}(\x[1]) = \dot{\rho}_{fw}^{\psi}(\x[1],\w[1]) + \v(\x)\tp \u[1](\x, t),
\end{equation}
where the introduced $\dot{\rho}_{fw}^{\psi}(\x[1],\w[1]) = \frac{\partial \rho^{\psi}(\x[1])}{\partial \x[1]} (f_{1}(\x[1]) + \w[1])$. The second derivative is then given by:
\begin{equation}
\ddot{\rho}^{\psi}(\x[1]) = \ddot{\rho}_{fw}^{\psi}(\x[1],\w[1]) + \u[1]\tp \v<\dot>(\x) + \v(\x)\tp \u<\dot>[1](\x, t).
\end{equation}
The second input $\u[2]$ will only appear in the last two terms as part of $\x<\dot>[2]$ when the time derivatives of $v(\x)$ and $\u[1](\x,t)$ are taken. For the middle term $\u[1]\tp \v<\dot>(\x)$, we have:
\begin{equation} \label{eq:ddrhop1}
\u[1]\tp \v<\dot>(\x)  = \u[1]\tp\left(\dfrac{\partial \v(\x)}{\partial \x[1]} \x<\dot>[1] + \dfrac{\partial \v(\x)}{\partial \x[2]} \x<\dot>[2]\right).
\end{equation}
For the last term $\v(\x)\tp \u<\dot>[1](\x, t)$, by inserting the control law (\ref{eq:unicycleControlu1}) for $\u[1](\x,t)$ one obtains:
\begin{align}
\v\tp\dfrac{\partial \u[1]}{\partial \x[2]} \x<\dot>[2] &= \v\tp \left[\kappa_1 K\v \dfrac{\partial (\v\tp\v + \varDelta)^{-1}}{\partial \v} + \dfrac{\kappa_1 K\I}{\v\tp \v + \varDelta}\right] \dfrac{\partial \v}{\partial \x[2]} \x<\dot>[2] \notag \\ 
&= \v\tp \left[\dfrac{-2 \kappa_1 K \v \v\tp  + \kappa_1 K(\v\tp \v + \varDelta)\I}{(\v\tp \v + \varDelta)^2}\right] \dfrac{\partial \v}{\partial \x[2]} \x<\dot>[2] \notag \\
&= -\left(\dfrac{\v\tp\v - \varDelta}{\v\tp \v + \varDelta}\right) \u[1]\tp \dfrac{\partial \v}{\partial \x[2]} \x<\dot>[2] \label{eq:ddrhop2}
\end{align}
after some simplifications. The arguments of each term have been dropped for better readability.
Adding the contribution of terms involving $\x<\dot>[2]$ (and hence $\u[2]$ after substituting in the system dynamics) from (\ref{eq:ddrhop1}) and (\ref{eq:ddrhop2}), we have that the component of $\ddot{\rho}^{\psi}(\x[1])$ depending on $\x<\dot>[2]$ is:
\begin{align*}
\ddot{\rho}^{\psi}_{\x[2]}(\x[1]) &:= \u[1]\tp \dfrac{\partial \v(\x)}{\partial \x[2]} \x<\dot>[2] - \left(\dfrac{\norm[2]{\v(\x)}^2 - \varDelta}{\norm[2]{\v(\x)}^2 + \varDelta}\right) \u[1]\tp \dfrac{\partial \v(\x)}{\partial \x[2]} \x<\dot>[2] \\
&= \dfrac{2\varDelta}{\norm[2]{\v(\x)}^2 + \varDelta}\u[1]\tp \dfrac{\partial \v(\x)}{\partial \x[2]} \x<\dot>[2].
\end{align*}
Substituting in the dynamics (\ref{eq:unicycleSystem}) for $\x<\dot>[2]$, the dependency on $\u[2]$ can be finally seen to be:
\begin{equation*}
\ddot{\rho}^{\psi}_{\u[2]}(\x[1]) = \dfrac{2\varDelta}{\norm[2]{\v(\x)}^2 + \varDelta}\u[1]\tp \dfrac{\partial \v(\x)}{\partial \x[2]} g_{22}(\x) \u[2] := \v[2](\x, \u[1])\tp \u[2]
\end{equation*}
If the controller for $\u[1]$ employs no regularization and so $\varDelta = 0$, then $\u[2]$ does not have an effect on the evolution of this term. This is expected, because $\u[1]$ from (\ref{eq:unicycleControlu1}) normalizes the term $\v(\x)$ when $\varDelta = 0$, effectively removing its influence on the change of the robustness metric. To allow this normalization, $\v(\x)$ must be kept nonzero using $\u[2]$.

As discussed, however, the individual $\v(\x)$ terms may become zero when combining different elementary robustness specifications. Therefore, regularization is needed to have well-defined control signals in such configurations and we must have $\varDelta \ne 0$. With this choice, $\u[2]$ has an impact on each $\ddot{\rho}^{\psi_{(i)}}_{\u[2]}(\x[1])$, and it is intuitively beneficial to use it to increase this term as $\psi_{(i)}$ nears violation, i.e., as $\x \rightarrow \lbar{\mathcal{X}}_{(i)}(t)$. In accordance with the previous controllers, we can thus define a more practical law for each specification in general as:
\begin{equation} \label{eq:unicyclePracticalu2}
\u<\tilde>[2](\x,t) = \begin{cases}
\nvec \qquad &\text{if } \x \in \mathcal{A}(t), \\
 \dfrac{\kappa_2(\x, t) K_2 \v[2](\x, \u[1])}{\norm{\v[2](\x, \u[1])}^{2} + \varDelta_{2}} \qquad &\text{if } \x \notin \mathcal{A}(t),
\end{cases}
\end{equation}
where $K_2 \ge 1$, $\varDelta_2 \ge 0$, and $\kappa_2$ is chosen similarly as before to increase as $\x \rightarrow \lbar{\mathcal{X}}(t)$ and become zero as $\x \rightarrow \bar{\mathcal{X}}(t)$.

Note that, as opposed to the controller (\ref{eq:unicycleControlu2}), $\u<\tilde>[2](\x,t)$ depends on $\u[1]$. When combining controllers, the consensus (\ref{eq:combinedu1}) is thus used to determine the elementary controls that are then averaged for $\u[2]$. For example, if a unicycle has been forced to go towards an obstacle, this will be taken into account while computing the controller $\u[2,(i)]$ whose aim is to avoid the obstacle, and $\u[2,(i)]$ will now attempt to turn the unicycle away from it as illustrated in the following section.

\subsection{Case study}

Consider the unicycle navigation task of reaching $r_g = 0.2$ distance within a goal region at $\x[g] = [1.0\ 3.5]\tp$ while avoiding a circular obstacle with radius $r_o = 1.2$ located at $\x[o] = [2.5\ 2.0]\tp$. The task is given by $\phi = \eventually[0][10]\psi_{(1)} \and G\psi_{(2)}$, where $\psi_{(1)} = \left\{r_g - \norm[2]{\x[1] - \x[g]} \ge 0 \right\}$ and $\psi_{(2)} = \left\{\norm[2]{\x[1] - \x[o]} - r_o \ge 0 \right\}$. The initial state of the unicycle is $\x[1,0] = [3.5\ 0.3]\tp$ and $\x[2,0] = 15\pi/16$. The inputs are constrained as $\norm[2]{\u[1]} = |v| \le 1$ and $\norm[2]{\u[2]} = |\omega| \le 5$.

The \ac{STL} formula $\phi$ can be satisfied by placing constraints on the robustness measures of $\psi_{(1)}$ and $\psi_{(2)}$. For the \textit{eventually} subtask of reaching the goal within 10 seconds, we use $\gamma_{(1)}(t) = -4+2.5t$ and $\varGamma_{(1)}(t) = \min(0.99\cdot r_g, \gamma_{(1)}(t) + 1)$, while for the \textit{always} subtask of avoiding the obstacle, we simply use $\gamma_{(2)}(t) = 0$ and $\varGamma_{(2)}(t) = 0.5$ to achieve this satisfaction. In all elementary controllers, the parameters are set as $K = 1$ and the regularization $\varDelta = 0.5$. The control actions $\u[1,(i)]$ are calculated according to the gains $\kappa_{1,(i)} = 5 \exp\left(-\frac{\rho^{\psi_{(i)}}(\x(t)) - \gamma_{(i)}(t)}{\varGamma_{(i)}(t)- \rho^{\psi_{(i)}}(\x(t))}\right)$ and are then combined according to (\ref{eq:combinedu1}) to determine the velocity $v = \u[1]$.

We compare the performance when combining the two derived elementary controllers for $\u[2]$. The first, defined in (\ref{eq:unicycleControlu2}), gives satisfaction guarantees individually for the two robustness specifications and is referred to as the augmented (`aug') controller. The second, defined in (\ref{eq:unicyclePracticalu2}), takes the discussed practical considerations into account and is labeled as practical (`prac'). For the augmented controller, we define $\rho^{\psi_{\text{aug}, (i)}}(\x) = \norm[2]{\v_{(i)}(\x)} - v_{\text{min}}$ with $v_{\text{min}} = 0.001$. The controller coefficients are $\kappa_{\text{aug},(i)} = (-G \u[1] + 5)\cdot \exp\left(-\frac{\rho^{\psi_{\text{aug}, (i)}}(\x(t)) - \gamma_{\text{aug}, (i)}(t)}{\varGamma_{\text{aug}, (i)}(t)- \rho^{\psi_{\text{aug}, (i)}}(\x(t))}\right)$. For the practical controller, we use the gain $\kappa_{2,(i)} = 20 \exp\left(-\frac{\rho^{\psi_{(i)}}(\x(t)) - \gamma_{(i)}(t)}{\varGamma_{(i)}(t)- \rho^{\psi_{(i)}}(\x(t))}\right)$. 

\begin{figure}[b]
	\centering \vspace{-3mm}
	\begin{subfigure}[b]{.49\linewidth}
		\centering
		\includegraphics[width=\linewidth]{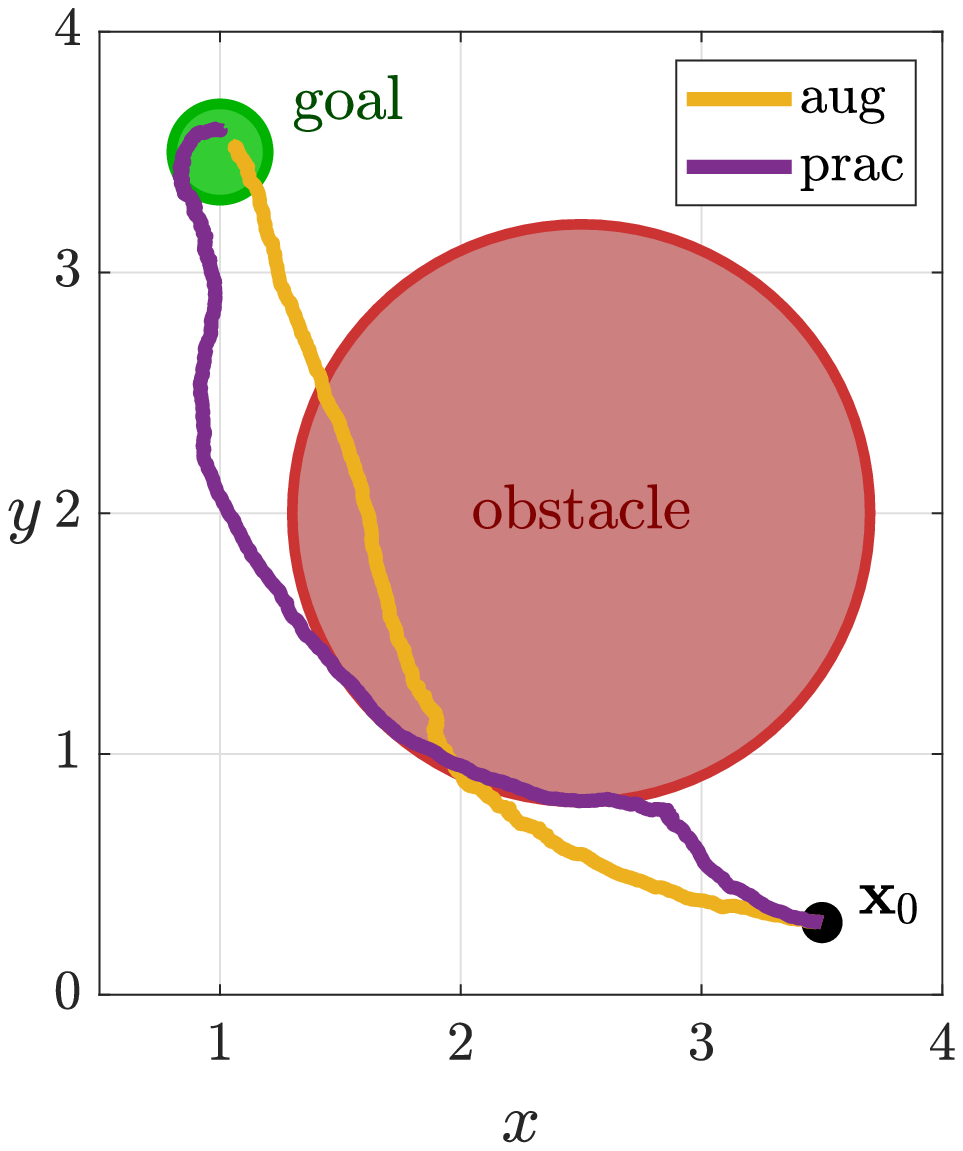}
		\caption{}
	\end{subfigure}\hfill
	\begin{subfigure}[b]{.46\linewidth}
		\centering
		\includegraphics[width=\linewidth,trim={0 8mm 0 0},clip]{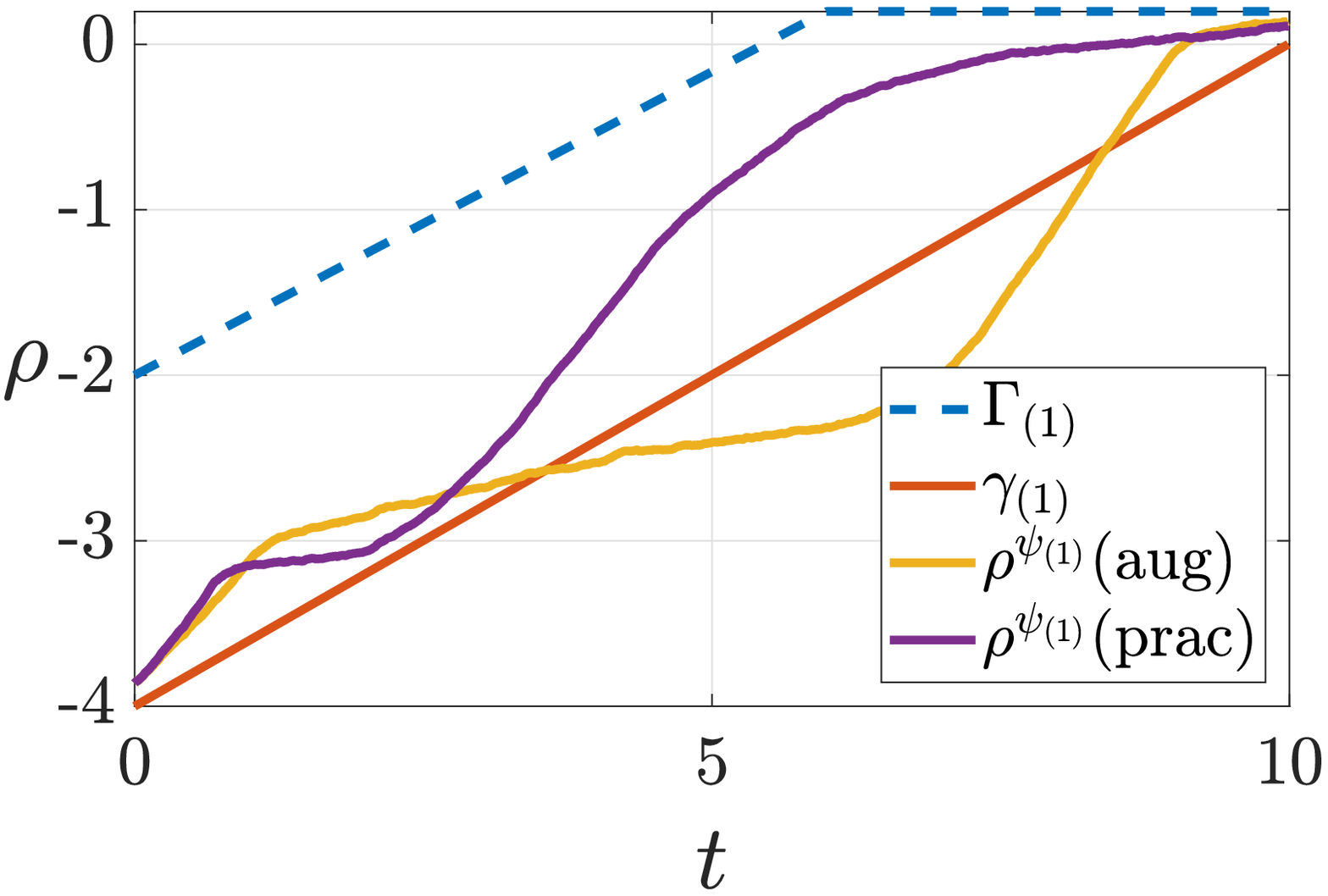}
		\includegraphics[width=\linewidth]{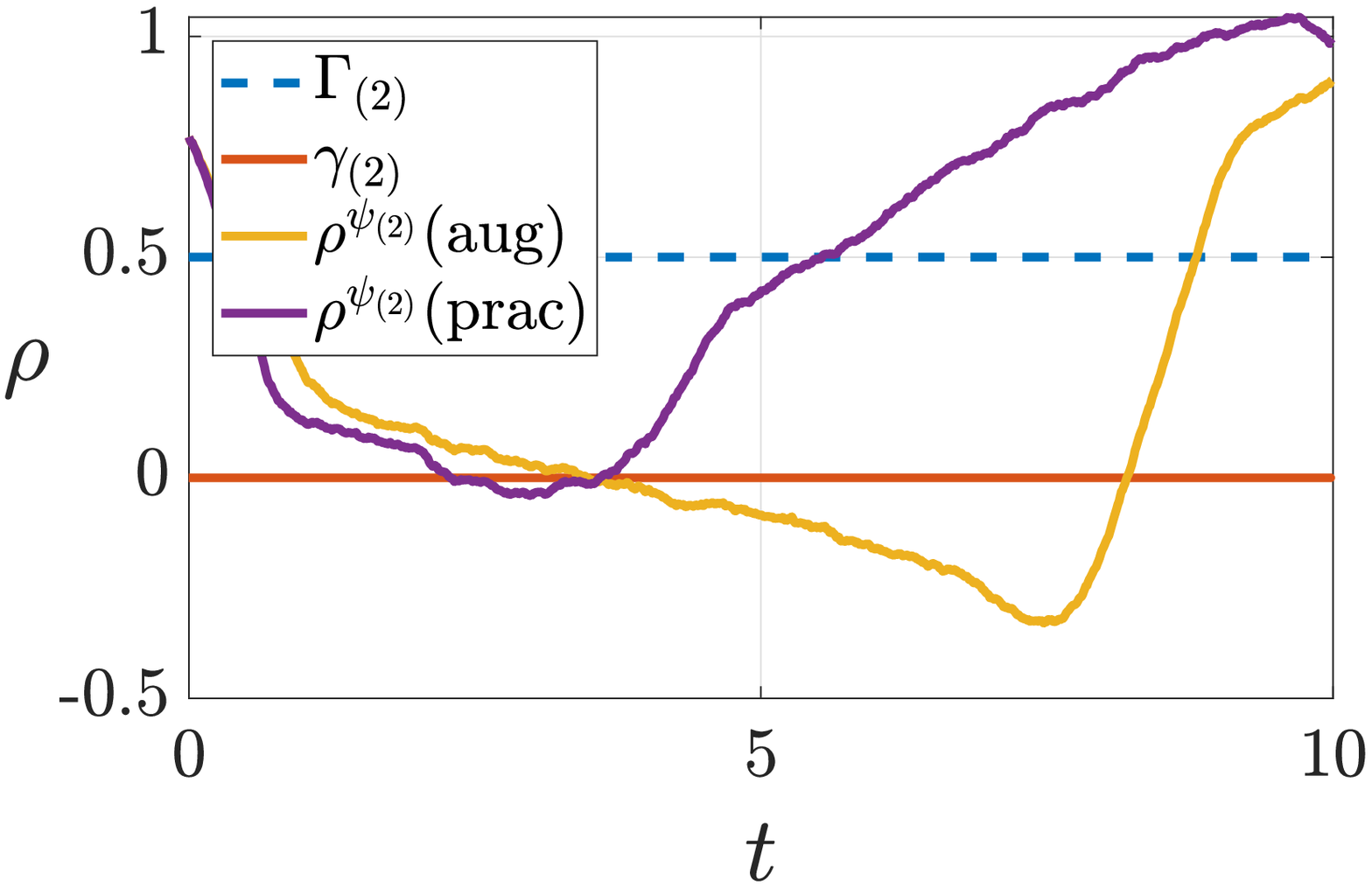}
		\caption{}
	\end{subfigure}
	\caption{(a) Sample trajectories and (b) evolution of robustness measures obtained for the case study navigational task using the `aug' and `prac' controllers (\ref{eq:unicycleControlu2}) and (\ref{eq:unicyclePracticalu2}), respectively.}
	\label{fig:sampleAvoidance}
\end{figure}

A sample result with added process noise is shown in Figure \ref{fig:sampleAvoidance} below. The `aug' controller has trouble avoiding the obstacle as it aims to keep the unicycle oriented towards it, while the specification of reaching the goal region forces the unicycle to still go in that direction. The `prac' controller takes this heading direction into account and steers away from the obstacle instead, almost satisfying the robustness specifications for $\psi_{(1)}$ and $\psi_{(2)}$. The practical controller already gives more effective results with minimal tuning in this simple example, and is expected to aid exploration better in learning algorithms such as in \cite{varnai2019prescribedARXIV}.


\section{Conclusions} \label{section:conclusions}

In this paper, we presented a framework to study the design of gradient-based controllers for dynamical system subject to \ac{STL} task specifications. A class of controllers that give satisfaction guarantees for simple dynamical systems and tasks was introduced. The use of the developed framework was exemplified by deriving controllers for unicycle-like systems as well. Finally, an initial approach on how such elementary controllers can be combined to solve more elaborate task specifications was discussed, and the significance of the related practicalities was highlighted by a unicycle navigation task. The introduced framework and concepts pave way for designing such inexpensive controllers for an even wider range of system dynamics, with their intended use being to effectively aid exploration in learning algorithms.


\addtolength{\textheight}{-12cm}   






\bibliographystyle{IEEEtran}
\bibliography{IEEEabrv,MyBib}

\end{document}